\newtheorem{theorem}{Theorem}[section]
\newtheorem{corollary}[theorem]{Corollary}
\newtheorem{lemma}[theorem]{Lemma}
\newtheorem{definition}[theorem]{Definition}
\newtheorem{observation}[theorem]{Observation}
\newcommand{\N}{\mathbb{N}}
\newcommand{\bA}{\mathbf{A}}
\newcommand{\tbA}{\tilde{\bA}}
\newcommand{\bone}{\mathbf{1}}
\newcommand{\R}{\mathbb{R}}
\newcommand{\bx}{\mathbf{x}}
\newcommand{\bz}{\mathbf{z}}
\newcommand{\bH}{\mathbf{H}}
\newcommand{\bW}{\mathbf{W}}
\newcommand{\cprop}{c^{\text{PROP}}}
\newcommand{\cef}{c^{\text{EF}}}
\newcommand{\ccdiv}{c^{\text{CD}}_k}
\DeclareMathOperator{\exdisc}{disc^{max}}
\DeclareMathOperator{\exodisc}{asymdisc^{max}}
\DeclareMathOperator{\disc}{disc}
\DeclareMathOperator{\odisc}{asymdisc}
\DeclareMathOperator{\wdisc}{wdisc}
\newcommand{\exwdisc}{\operatorname{wdisc}^{\operatorname{max}}}
\newcommand{\tchi}{\tilde{\chi}}
\newcommand{\ochi}{\overline{\chi}}
\newcommand{\Slarge}{S_{\text{large}}}
\newcommand{\tSlarge}{\tilde{S}_{\text{large}}}
\title{Tight Lower Bound for Multicolor Discrepancy}
\date{\today}
\author{Pasin Manurangsi\\Google Research\\\texttt{\small pasin@google.com}
\and
Raghu Meka\\UCLA\\\texttt{\small raghum@cs.ucla.edu}
}
\begin{document}

\maketitle

\begin{abstract}
We prove the following asymptotically tight lower bound for $k$-color discrepancy: For any $k \geq 2$, there exists a hypergraph with $n$ \emph{hyperedges} such that its $k$-color discrepancy is at least $\Omega(\sqrt{n})$. This improves on the previously known lower bound of $\Omega(\sqrt{n/\log k})$ due to Caragiannis et al.~\cite{caragiannis2025newlowerboundmulticolor}. As an application, we show that our result implies improved lower bounds for group fair division.
\end{abstract}

%!TEX root = main_sosa.tex
\section{Introduction}

In the classical combinatorial setting of discrepancy theory (see e.g.~\cite{Chazelle01}), we are given a hypergraph and the goal is to color each vertex using one of the \emph{two colors}, say red and blue. The discrepancy of a hyperedge is the difference between the number of its blue vertices (equivalently, red vertices) and half of the total number of vertices in the hyperedge\footnote{Sometimes the discrepancy is defined to be twice this amount, i.e. the difference between the numbers of its blue vertices and its red vertices}. The goal is to find a coloring that minimizes the corresponding discrepancy of the hypergraph--which is defined as the maximum discrepancy across all the hyperedges.

In this work, we focus on the \emph{$k$-color} generalization of discrepancy, pioneered by Doerr and Srivastav~\cite{DoerrSr03}. Here we color each vertex using one of $k$ colors, and the discrepancy of each hyperedge is the maximum difference between the number of its vertices of each color and $\frac{1}{k}$ of the total number of its vertices. For notational convenience, we formalize this using a matrix notion, where the matrix $\bA$ can be viewed as the incidence matrix of the hypergraph, as follows.

\begin{definition}[\cite{DoerrSr03}] \label{def:k-disc}
The \emph{$k$-color discrepancy} of a matrix\footnote{While this allows $\bA \in [0, 1]^{n \times m}$, our lower bound holds for $\bA \in \{0, 1\}^{n \times m}$, which corresponds to a hypergraph.} $\bA \in [0, 1]^{n \times m}$ is defined as\footnote{We use $\bone$ to denote the all-1 vector and $\bone(S)$ to denote the indicator vector of $S$.}
\begin{align*}
\disc(\bA, k) := \min_{\chi: [m] \to [k]} \max_{s \in [k]} \left\|\bA\left(\frac{1}{k} \cdot \bone - \bone(\chi^{-1}(s))\right)\right\|_{\infty}.
\end{align*}
\end{definition}
We are interested in a bound on the $k$-color discrepancy among all matrices $\bA$ with $n$ rows (corresponding to hypergraphs with $n$ \emph{hyperedges}\footnote{We use $n$ to denote the number of hyperedges to conform with the notations in fair division (e.g.~\cite{ManurangsiS22}).}), as formalized below.
\begin{definition}
For all $k, n \in \N$, let 
$\exdisc(n, k) := \sup_{m \in \N} \sup_{\bA \in [0, 1]^{n \times m}} \disc(\bA, k).$
\end{definition}

For the case of two colors ($k = 2$), the celebrated result of \cite{Spencer85}\footnote{While \cite{Spencer85} states the upper bound as $O(\sqrt{m \ln(2m/n)})$ when $n \leq m$, it is clear that this is at most $O(\sqrt{n})$. For an explicit statement, see \cite[Corollary 12.3.4]{AlonSp00}.} implies that $\exdisc(n, 2) \leq O(\sqrt{n})$.
Doerr and Srivastav~\cite{DoerrSr03} extended this bound\footnote{While \cite{DoerrSr03} focuses only on the case $m \geq n$, a ``well known'' technique \cite[Section 4]{Spencer85} (see also \cite{LovaszSpVe86}) can extend their bound to arbitrary $m$. See \cite[Lemma 2.7]{ManurangsiS22} and the explanation therein for more detail.} for all $k \in \N$. However, they only provide a lower bound of $\Omega(\sqrt{n/k})$, leaving a gap of $\Theta(\sqrt{k})$. In a recent work of Caragiannis et al.~\cite{caragiannis2025newlowerboundmulticolor}, an improved lower bound of $\Omega(\sqrt{n/\log k})$ is established, thus significantly reducing the gap to just $\Theta(\sqrt{\log k})$.

Caragiannis et al. also shows that their probabilistic construction and argument can be extended to yield improved lower bounds for group fair division, which will be defined formally in \Cref{sec:group-fair}. Interestingly, unlike previous work of Manurangsi and Suksompong~\cite{ManurangsiS22} which obtains such lower bounds via direct reductions from discrepancy results, Caragiannis et al. use separate (slightly different) constructions and arguments for each of their results.

\paragraph*{Our Contributions.} First, we close the aforementioned gap on the bound for $\exdisc(n, k)$ by showing that it is $\Omega(\sqrt{n})$ regardless of the value of $k$ (\Cref{cor:main-multicolor}). Our proof is based on a very simple modification of the $k = 2$ case. In fact, our technique gives the same lower bound (\Cref{thm:main-weighted}) for a smaller quantity called \emph{$1/k$-weighted discrepancy} (defined in \Cref{sec:prelim}), which, as explained below, can be useful in subsequent applications.

Our second contribution is to give asymptotically improved bounds for group fair division. The improvement is at least a factor of $\Omega(\sqrt{\log k})$ in all regime of parameters and, in some regime of parameters, can be as large as $\Omega(k \sqrt{\log k})$. (See \Cref{sec:group-fair} for a more detailed discussion.) Unlike~\cite{caragiannis2025newlowerboundmulticolor}, we achieve these improvements via simple reductions from our lower bound on $1/k$-weighted discrepancy. These reductions can be viewed as strengthenings of those in \cite{ManurangsiS22}.

Finally, we give an improved \emph{upper bound} for an approximate fairness guarantee (\emph{proportionality}) for group fair division. Note that this result does \emph{not} use our main lower bound on discrepancy, but rather is based on a simple refinement of current upper bound proof techniques from~\cite{DoerrSr03,ManurangsiS22}. The full details are deferred to \Cref{sec:prop-ub}.

\section{Preliminaries}
\label{sec:prelim}

For $n \in \N$, let $[n]$ denote $\{1, \dots, n\}$.
For $x \in \R$, we write $[x]_+$ as a shorthand for $\max\{0, x\}$. We extend this notion naturally to vectors: for $\bx \in \R^d$, $[\bx]_+$ is a vector whose $i$-th entry is $[\bx_i]_+$.

We will also work with the notion of \emph{$p$-weighted discrepancy} for $p \in [0, 1]$. Roughly speaking, the goal here is to find a vector $\bx \in \{0, 1\}^m$ such that $\bA \cdot \bx$ is close to $p$ times $\bA \cdot \bone$, as formalized below. When $p = 1/2$, this notation coincides with 2-color discrepancy (by letting $\bx = \bone(\chi^{-1}(1))$). However, when $p = 1/k$ for $k > 2$, this is a weaker notion than $k$-color discrepancy, since we impose no requirement on $\bone - \bx$ in $p$-weighted discrepancy.

%We will also use the following notations:
\begin{definition}[\cite{DoerrSr03}]
For $p \in [0, 1]$, the \emph{$p$-weighted discrepancy} of $\bA \in [0, 1]^{n \times m}$ is 
$$\wdisc_p(\bA) := \min_{\bx \in \{0, 1\}^m} \|\bA(p \cdot \bone - \bx)\|_{\infty}.$$

Furthermore, for $p \in [0,1]$ and $n \in \N$, let $\exwdisc_p(n) := \sup_{m \in \N} \sup_{\bA \in [0, 1]^{n \times m}} \wdisc_p(\bA)$.
\end{definition}

It is known that the $p$-weighted discrepancy is at most $O(\sqrt{n})$:
\begin{theorem}[\cite{LovaszSpVe86}] \label{thm:wdisc-ub}
There is $\gamma \geq 1$ such that $\exwdisc_p(n) \leq \gamma \cdot \sqrt{n}$ for all $n \in \N, p \in [0, 1]$.
\end{theorem}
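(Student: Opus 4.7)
My plan is to prove this via the classical iterated dyadic rounding technique of Lov\'asz, Spencer, and Vesztergombi. I will take as a black box the uniform-in-$m$ version of Spencer's theorem: there is a constant $C$ such that for every $\bA \in [0,1]^{n\times m}$ there exists $\chi \in \{-1,+1\}^m$ with $\|\bA\chi\|_\infty \leq C\sqrt{n}$. This is precisely the $k=2$ bound already cited in the introduction, combining \cite{Spencer85} with the ``well-known'' reduction for large $m$.

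First, I would approximate $p$ by a nearby dyadic rational. Fix $T \in \N$ large enough that $2^{-T}m \leq \sqrt{n}$, and set $\tilde p := \lfloor p\cdot 2^T\rfloor/2^T \in [0,1]$, so that $|p-\tilde p| \leq 2^{-T}$. The initial approximation error is bounded by
\[
\|\bA(p\bone - \tilde p\bone)\|_\infty \;\leq\; |p-\tilde p|\cdot\|\bA\bone\|_\infty \;\leq\; 2^{-T}\cdot m \;\leq\; \sqrt{n},
\]
so it suffices to round $\tilde p\bone$ to a $\{0,1\}$-vector while incurring only $O(\sqrt{n})$ further error.

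To do this, I would construct a sequence $\tilde p\bone = \bx^{(T)}, \bx^{(T-1)}, \ldots, \bx^{(0)}$ in which every entry of $\bx^{(t)}$ is a multiple of $2^{-t}$ lying in $[0,1]$. Passing from $\bx^{(t)}$ to $\bx^{(t-1)}$, only the coordinates $S_t \subseteq [m]$ whose entry is an \emph{odd} multiple of $2^{-t}$ need to change, and each such coordinate must move by exactly $\pm 2^{-t}$ (to the nearer multiple of $2^{-(t-1)}$). Applying Spencer's theorem to the column submatrix $\bA_{S_t} \in [0,1]^{n\times |S_t|}$ yields a sign vector $\chi^{(t)}\in\{-1,+1\}^{|S_t|}$ with $\|\bA_{S_t}\chi^{(t)}\|_\infty \leq C\sqrt{n}$; using $\chi^{(t)}$ to choose the rounding direction gives
\[
\|\bA(\bx^{(t-1)} - \bx^{(t)})\|_\infty \;=\; 2^{-t}\,\|\bA_{S_t}\chi^{(t)}\|_\infty \;\leq\; 2^{-t}\,C\sqrt{n}.
\]
Since rounding an odd multiple of $2^{-t}$ by $\pm 2^{-t}$ always lands on an even multiple of $2^{-t}$ (hence a multiple of $2^{-(t-1)}$) in $[0,1]$, inductively $\bx^{(0)} \in \{0,1\}^m$.

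Finally, by the triangle inequality and telescoping,
\[
\|\bA(p\bone - \bx^{(0)})\|_\infty \;\leq\; \|\bA(p\bone - \tilde p\bone)\|_\infty + \sum_{t=1}^T \|\bA(\bx^{(t)} - \bx^{(t-1)})\|_\infty \;\leq\; \sqrt{n} + \sum_{t=1}^T 2^{-t}\,C\sqrt{n} \;\leq\; (C+1)\sqrt{n},
\]
so $\gamma := C+1$ works. The point I expect to require the most care is the uniform-in-$m$ invocation of Spencer at each level: because $|S_t|$ can exceed $n$ at early levels (for instance at $t=T$, where $S_T$ may be all of $[m]$), one must appeal to the strengthened form of Spencer's bound -- the same one that underlies $\exdisc(n,2) \leq O(\sqrt n)$ -- rather than the plain $O(\sqrt{n\log(2m/n)})$ statement, so that no extra $\sqrt{\log(m/n)}$ factor accumulates across the $T$ levels.
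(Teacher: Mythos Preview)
Your proposal is correct and is precisely the iterated dyadic rounding argument of Lov\'asz--Spencer--Vesztergombi. Note that the paper does not supply its own proof of this theorem: it states it as a known result and cites \cite{LovaszSpVe86}, whose method is exactly the one you describe.
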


The following is a simple but useful observation:
\begin{observation}[\cite{DoerrSr03}] \label{obs:color-v-weighted}
For any $k, n \in \N$, $\exdisc(n, k) \geq \exwdisc_{1/k}(n)$
\end{observation}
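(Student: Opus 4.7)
The plan is to show the stronger pointwise inequality $\disc(\bA,k) \ge \wdisc_{1/k}(\bA)$ for every matrix $\bA \in [0,1]^{n\times m}$, and then take the supremum over $\bA$ and $m$. This reduces the statement to a direct comparison of the two minimization problems in \Cref{def:k-disc} and the definition of $\wdisc_p$.

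The key observation is that $\wdisc_{1/k}(\bA)$ is a minimum over \emph{all} vectors $\bx \in \{0,1\}^m$, whereas $\disc(\bA,k)$ only has to contend with vectors of a special form, namely indicators of color classes of a $k$-coloring $\chi$. Concretely, for any coloring $\chi : [m]\to[k]$ and any color $s \in [k]$, the indicator $\bone(\chi^{-1}(s))$ lies in $\{0,1\}^m$ and is therefore feasible in the minimization defining $\wdisc_{1/k}(\bA)$. Hence
\begin{align*}
\left\|\bA\left(\tfrac{1}{k}\bone - \bone(\chi^{-1}(s))\right)\right\|_\infty \;\ge\; \wdisc_{1/k}(\bA).
\end{align*}

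From here the proof is mechanical: take the maximum over $s \in [k]$ on the left (the right-hand side does not depend on $s$), then take the minimum over $\chi$, to conclude $\disc(\bA,k) \ge \wdisc_{1/k}(\bA)$. Taking $\sup_m$ and $\sup_\bA$ on both sides gives the claim. There is no real obstacle here — the argument is just unpacking both definitions and noting that the set of indicator vectors of color classes is contained in $\{0,1\}^m$, so constraining to such $\bx$ can only increase the discrepancy value; the only thing to be mildly careful about is that the min--max on the $k$-color side becomes a min on the weighted side precisely because the weighted-side objective has no dependence on the other color classes.
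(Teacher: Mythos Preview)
Your argument is correct: showing $\disc(\bA,k)\ge \wdisc_{1/k}(\bA)$ pointwise via the inclusion of color-class indicators into $\{0,1\}^m$, and then passing to the supremum, is exactly the intended one-line reasoning. The paper itself does not spell out a proof of this observation (it is stated as a simple fact with a citation to~\cite{DoerrSr03}), so there is nothing further to compare.
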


\section{Proof of the Main Theorem}

We recall the following well known construction in discrepancy literature: Let $\bW$ be $\frac{1}{2}(\bone_{n \times n} + \bH)$ where $\bone_{n \times n}$ is an $n \times n$ all-$1$ matrix and $\bH$ is an $n \times n$ Hadamard matrix. This construction appeared as early as \cite{LovaszSpVe86} (based on~\cite{Spencer85}) and it yields the $\Omega(\sqrt{n})$ lower bound for $k = 2$.

For our purpose, it will be convenient to employ the following property (where $\bW$ is as defined above), which was shown in \cite{Chazelle01} and stated more explicitly in \cite{CharikarNeNi11}.

\begin{lemma}[\cite{Chazelle01,CharikarNeNi11}] \label{lem:hadamard-discrepancy}
For any $n \in \N$ that is a power of two, there exists $\bW \in \{0, 1\}^{n \times n}$ such that $\|\bW\bz\|_2^2 \geq \frac{n}{4} \cdot \left(\sum_{i=2}^n \bz_i^2\right)$ for all $\bz \in \R^n$.
\end{lemma}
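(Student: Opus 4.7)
My plan is to take $\bW := \frac{1}{2}(\bone_{n \times n} + \bH)$, where $\bH \in \{-1,+1\}^{n \times n}$ is the Sylvester (Walsh--Hadamard) matrix, after the standard normalization that its first column equals $\bone$. This normalization is always achievable by sign-flipping columns, which preserves the Hadamard property $\bH^\top \bH = n \bI$. With this choice every entry of $\bW$ lies in $\{0,1\}$. To prove the inequality I would directly analyze the Gram form $\|\bW\bz\|_2^2 = \bz^\top \bW^\top \bW \bz$.

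First I would compute
\begin{align*}
4\, \bW^\top \bW \;=\; \bone_{n \times n}^\top \bone_{n \times n} + \bone_{n \times n}^\top \bH + \bH^\top \bone_{n \times n} + \bH^\top \bH.
\end{align*}
The three ingredients are: (i) $\bone_{n \times n}^\top \bone_{n \times n} = n \cdot \bone_{n \times n}$; (ii) $\bH^\top \bH = n \bI$; and (iii) since every column of $\bH$ other than the first is orthogonal to $\bone$, the matrix $\bone_{n \times n}^\top \bH$ has its first column equal to $n\,\bone$ and all other columns zero, while $\bH^\top \bone_{n \times n}$ is its transpose. Substituting these identities into $\bz^\top (4\bW^\top\bW) \bz$ yields
\begin{align*}
\|\bW\bz\|_2^2 \;=\; \frac{n}{4}\!\left[\Bigl(\sum_{i=1}^n z_i\Bigr)^2 + 2 z_1 \sum_{i=1}^n z_i + \sum_{i=1}^n z_i^2\right].
\end{align*}

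Next I would complete the square on the first two terms, writing them as $\bigl(z_1 + \sum_{i} z_i\bigr)^2 - z_1^2$. The $-z_1^2$ cancels exactly with the $+z_1^2$ summand of $\sum_{i=1}^n z_i^2$, leaving
\begin{align*}
\|\bW\bz\|_2^2 \;=\; \frac{n}{4}\!\left[\Bigl(z_1 + \sum_{i=1}^n z_i\Bigr)^{\!2} + \sum_{i=2}^n z_i^2\right] \;\geq\; \frac{n}{4}\sum_{i=2}^n z_i^2,
\end{align*}
which is the claimed bound.

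The algebra is essentially routine; what I would call the only ``obstacle'' is picking the right normalization of $\bH$, which simultaneously ensures that (a) $\bW \in \{0,1\}^{n \times n}$ and (b) the cross terms $\bone_{n \times n}^\top \bH$ and $\bH^\top \bone_{n \times n}$ concentrate on the first coordinate. The latter is precisely the algebraic reason why the ``lost'' direction on the right-hand side is the first coordinate (and hence the sum runs from $i=2$), rather than some arbitrary direction --- a feature that matters when the lemma is subsequently invoked to build hypergraphs with large $k$-color discrepancy.
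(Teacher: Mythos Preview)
Your proof is correct and uses exactly the construction the paper itself describes just before the lemma, namely $\bW = \tfrac{1}{2}(\bone_{n\times n} + \bH)$ with $\bH$ a Hadamard matrix whose first column is $\bone$. The paper does not supply its own proof of this lemma at all --- it simply cites \cite{Chazelle01,CharikarNeNi11} --- so there is nothing to compare against, but your self-contained derivation via the Gram matrix and completing the square is clean and matches the standard argument in those references.

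One cosmetic nit: to force the \emph{first column} of a general Hadamard matrix to be $\bone$ you would sign-flip \emph{rows}, not columns (column flips leave column~1 unchanged unless you flip column~1 itself). This is immaterial here, since the Sylvester construction already has first column $\bone$, but you may want to correct the parenthetical remark.
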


We are now ready to prove our main result, an $\Omega(\sqrt{n})$ lower bound on $\exwdisc_p(n)$ for any $p \in (0, 1)$. This improves on $\Omega(\sqrt{p n})$ lower bound from \cite{ManurangsiS22}, and is asymptotically tight in light of \Cref{thm:wdisc-ub}.

\begin{theorem} \label{thm:main-weighted}
For any $p \in (0, 1), n \in \N$, we have $\exwdisc_p(n) \geq \sqrt{(n - 1)}/16$.
\end{theorem}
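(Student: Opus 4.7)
The idea is to take $\bA$ to be $K$ horizontal copies of the Hadamard-based matrix $\bW$ from \Cref{lem:hadamard-discrepancy}, padded with zero rows. Duplicating each column $K$ times reduces the minimization over $\bx \in \{0,1\}^{Kn'}$ to one over per-column sums $\mathbf{s} \in \{0,1,\ldots,K\}^{n'}$, effectively replacing the target $p$ by $Kp$. If $K$ is chosen so that $Kp$ is bounded away from every integer, then the ``shift'' $\bz := Kp\bone - \mathbf{s}$ will have $|\bz_j| \geq 1/4$ for every coordinate, at which point \Cref{lem:hadamard-discrepancy} delivers $\Omega(\sqrt{n})$.

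By the substitution $\bx \mapsto \bone - \bx$ one has $\wdisc_p(\bA) = \wdisc_{1-p}(\bA)$, so I may assume $p \leq 1/2$. I then pick $K \in \mathbb{N}$ so that every integer is at distance at least $1/4$ from $Kp$: take $K = 1$ when $p \geq 1/4$ (the nearest integer is at distance $p \geq 1/4$), and $K = \lceil 1/(2p)\rceil$ when $p < 1/4$, which forces $Kp \in [1/2, 3/4)$, so the nearest integer is at distance $1 - Kp > 1/4$. Let $n' = 2^{\lfloor \log_2 n \rfloor}$, a power of two with $n' \geq (n+1)/2$, and let $\bW \in \{0,1\}^{n' \times n'}$ be the matrix from \Cref{lem:hadamard-discrepancy}. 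Define $\bA \in \{0,1\}^{n \times Kn'}$ by placing $K$ copies of $\bW$ side-by-side in the top $n'$ rows and zero in the remaining $n-n'$ rows.

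For any $\bx = (\bx^{(1)}, \ldots, \bx^{(K)}) \in \{0,1\}^{Kn'}$, let $\mathbf{s} := \sum_{\ell=1}^K \bx^{(\ell)}$. A direct computation using $\bA\bone_{Kn'} = K\bW\bone_{n'}$ (on the top $n'$ rows) gives $\bA(p\bone - \bx)|_{[n']} = \bW\bz$ for $\bz := Kp\bone_{n'} - \mathbf{s}$. The choice of $K$ ensures $|\bz_j| \geq 1/4$ for all $j \in [n']$, whence $\sum_{j=2}^{n'} \bz_j^2 \geq (n'-1)/16$. Combining \Cref{lem:hadamard-discrepancy} with $\|\cdot\|_\infty \geq \|\cdot\|_2/\sqrt{n'}$,
\[
\|\bA(p\bone - \bx)\|_\infty \;\geq\; \|\bW\bz\|_\infty \;\geq\; \frac{1}{\sqrt{n'}}\sqrt{\frac{n'}{4}\cdot\frac{n'-1}{16}} \;=\; \frac{\sqrt{n'-1}}{8} \;\geq\; \frac{\sqrt{n-1}}{16},
\]
where the final inequality uses $4(n'-1) \geq n-1$, a consequence of $n' \geq (n+1)/2$.

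The main obstacle is the opening step of selecting $K$ so that every integer is $1/4$-far from $Kp$; this is a brief case analysis but is the decisive quantitative idea. It is precisely this step that fails when one tries to use $\bA = \bW$ alone (since the entries of $p\bone - \bx$ can have magnitude as small as $\min(p,1-p)$), and that failure is what restricts the approach of~\cite{ManurangsiS22} to $\Omega(\sqrt{pn})$.
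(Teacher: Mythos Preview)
Your proof is correct and follows essentially the same approach as the paper: stack $\Theta(1/p)$ horizontal copies of the Hadamard matrix $\bW$ so that the effective target $Kp$ is at distance at least $1/4$ from every integer, then invoke \Cref{lem:hadamard-discrepancy}. The only cosmetic differences are that the paper takes $t=\lfloor 1/(2p)\rfloor$ uniformly (instead of your case split on $p\gtrless 1/4$) and handles non-power-of-two $n$ via monotonicity of $\exwdisc_p$ rather than by padding with zero rows.
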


As alluded to in the introduction, our construction is very simple: Just take $\bA$ to be $\Theta(1/p)$ copies of $\bW$ horizontally stacked together! (Equivalently, the corresponding hypergraph of $\bA$ has $\Theta(1/p)$ copies of each vertex in $\bW$ and each new hyperedge contains all copies of the vertices belonging to the original hyperedge.) This is formalized below.

\begin{proof}[Proof of \Cref{thm:main-weighted}]
Since $\exwdisc_p(n)$ is non-decreasing in $n$, it suffices to prove $\exwdisc_p(n) \geq \sqrt{(n - 1)}/8 =: \Delta$ for all $n \in \N$ that is a power of two. Furthermore, it is obvious that $\exwdisc_p(n) = \exwdisc_{1 - p}(n)$. Thus, we may assume that $p 
\leq 1/2$.

Let $t = \lfloor 1/2p \rfloor$ and $\bW \in \{0, 1\}^{n \times n}$ be the matrix guaranteed by \Cref{lem:hadamard-discrepancy}. We construct the matrix $\bA \in \{0,1\}^{n \times nt}$ as $\bA := [\bW \mid \cdots \mid \bW]$ where there are $t$ blocks of $\bW$.

Consider any $\bx \in \{0, 1\}^{nt}$. Let $\bz \in \{0, \dots, t\}^{n}$ be defined by $\bz_i = \sum_{j=0}^{t - 1} \bx_{i+nj}$. It is simple to see that
$\bA(p \cdot \bone - \bx) = \bW(pt \cdot \bone - \bz).$ 
Notice, by our choice of $t$, that $1/4 \leq pt \leq 1/2$. As a result, $pt \cdot \bone - \bz$ is a vector such that every coordinate has an absolute value at least $1/4$. Applying \Cref{lem:hadamard-discrepancy}, we thus have
$\|\bA(p \cdot \bone - \bx)\|^2_2 = \|\bW(pt \cdot \bone - \bz)\|^2_2 \geq \frac{n}{4} \cdot (n-1)/16 = n\Delta^2.$
This implies that $\|\bA(p \cdot \bone - \bx)\|_\infty \geq \Delta$. Thus, $\wdisc_p(\bA) \geq \Delta$, which concludes our proof.
\end{proof}

\Cref{obs:color-v-weighted} and \Cref{thm:main-weighted} together yield our main lower bound for $\exdisc(n, k)$:

\begin{corollary} \label{cor:main-multicolor}
For any $n, k \in \N$ such that $k \geq 2$, we have $\exdisc(n, k) \geq \sqrt{(n - 1)}/16$.
\end{corollary}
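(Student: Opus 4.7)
The corollary is essentially a one-line consequence of combining the two previously established facts in the excerpt, so the proof plan is mostly a matter of composing the right ingredients rather than doing new work.

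My plan is as follows. First, I would note that since $k \geq 2$, the value $p := 1/k$ lies in $(0, 1/2] \subseteq (0,1)$, so \Cref{thm:main-weighted} applies with this choice of $p$. Invoking it gives
\[
\exwdisc_{1/k}(n) \;\geq\; \sqrt{n-1}/16.
\]
Next, I would appeal to \Cref{obs:color-v-weighted}, which records that any lower bound on the $1/k$-weighted discrepancy transfers to a lower bound on the $k$-color discrepancy: $\exdisc(n, k) \geq \exwdisc_{1/k}(n)$. Chaining these two inequalities yields $\exdisc(n, k) \geq \sqrt{n-1}/16$, which is exactly the claim.

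There is really no obstacle here: the technical content was already absorbed into \Cref{thm:main-weighted}, whose proof handles the delicate balancing between $p$ and the number $t$ of horizontal copies of $\bW$ needed to force $pt \in [1/4, 1/2]$. The only thing to double-check is that \Cref{obs:color-v-weighted} is applicable in the stated form, which it is, since the $k$-coloring $\chi$ in the definition of $\disc(\bA, k)$ in particular produces a subset $\chi^{-1}(1) \subseteq [m]$ whose indicator vector $\bx := \bone(\chi^{-1}(1))$ is a feasible choice for the minimization in $\wdisc_{1/k}(\bA)$. Thus the corollary follows immediately and no further argument is needed.
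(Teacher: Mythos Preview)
Your proposal is correct and matches the paper's own argument exactly: the corollary is stated as an immediate consequence of \Cref{obs:color-v-weighted} and \Cref{thm:main-weighted}, combined precisely as you describe. No additional argument is given or needed.
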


\section{Applications to Group Fair Division}
\label{sec:group-fair}

We consider the following setting of (approximate) fair division of indivisible goods for groups of agents, which has been studied in a number of recent works~\cite{ManurangsiSu17,Suksompong18,Suksompong18-2,SegalhaleviSu19,KyropoulouSuVo20,ManurangsiS22,ManurangsiS24,caragiannis2025newlowerboundmulticolor}.
There is a set $G$ of items and $k$ groups of agents of sizes $n_1, \dots, n_k$. We use $n$ to denote $n_1 + \cdots + n_k$. Let $a^{(i, j)}$ denote the $j$-th agent in group $i$ and $u^{(i, j)}$ be her utility function; we assume that these utilities are monotone and additive\footnote{This means that $u^{(i, j)}(S) = \sum_{g \in S} u^{(i, j)}(g)$ for all $S \subseteq G$, and that $u^{(i, j)}(g) \geq 0$ for all $g \in G$.}. An allocation $A = (A_1, \dots, A_k)$ is a partition of $G$ into $k$ bundles where the bundle $A_i$ is allocated to group $i$. We study three notions of approximate fairness, similar to~\cite{ManurangsiS22}:
\begin{itemize}
\item An allocation $A$ is \emph{envy-freeness up to $c$ goods (EF$c$)} if, for all $i, i' \in [k], j \in [n_i]$, there exists a set $B \subseteq G$ of size at most $c$ such that $u^{(i,j)}(A_i) \geq u^{(i,j)}(A_{i'} \setminus B)$.
\item An allocation $A$ is \emph{proportional up to $c$ goods (PROP$c$)} if, for all $i \in [k], j \in [n_i]$, there exists a set $B \subseteq G \setminus A_i$ of size at most $c$ such that $u^{(i,j)}(A_i) \geq u^{(i,j)}(G) / k - u^{(i,j)}(B)$.
\item  An allocation $A$ is \emph{a consensus $1/k$-division $c$ goods\footnote{This notion does not depend on how the users are divided across the different groups.} (CD$_kc$)} if, for all $i, i' \in [k]$ and any agent $a$ (which can be from any group), there exists a set $B \subseteq G$ of size at most $c$ such that $a$ values $A_i$ at least as much as $A_{i'} \setminus B$.
\end{itemize}

We define $\cef(n_1, \dots, n_k)$ (resp., $\cprop(n_1, \dots, n_k)$) to be the smallest value of $c$ such that an EF$c$ (resp., PROP$c$) allocation always exists.
Let $\ccdiv(n)$ denote the analogous value for CD$_kc$ (which only depend on the total number of agents $n$ but not on $n_1, \dots, n_k$). Since these quantities are independent of the orders of $n_1, \dots, n_k$, we will assume throughout that $n_1 \geq \cdots \geq n_k \geq 1$ which will allow us to state the results in a more convenient manner.

\paragraph*{Our Results.}
We provide improved lower bounds for all three quantities, as stated below.

\begin{corollary} \label{cor:cd}
For positive integers $k \geq 2$ and $n \geq k$, $\ccdiv(n) \geq \Omega(\sqrt{n})$.
\end{corollary}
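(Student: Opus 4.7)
My approach is a direct reduction from the $k$-color discrepancy lower bound in \Cref{cor:main-multicolor}. For $n \geq k$, let $\bA \in \{0,1\}^{n \times m}$ be the matrix from that corollary, which satisfies $\disc(\bA, k) \geq \sqrt{n-1}/16$. I would construct a fair-division instance whose $n$ agents have $\{0,1\}$-valued utility functions given by the rows of $\bA$: agent $i$ values item $j$ at $\bA_{ij}$, where the items are indexed by the columns of $\bA$. Since $\ccdiv(n)$ is independent of the group sizes $n_1,\dots,n_k$, I can simply take $n_1 = n - k + 1$ and $n_2 = \cdots = n_k = 1$ to respect the convention $n_s \geq 1$; this is possible precisely because $n \geq k$.

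The key step is to show that any CD$_k c$ allocation $A = (A_1, \dots, A_k)$ on this instance induces a $k$-coloring $\chi : [m] \to [k]$ (with $\chi(j) = s$ iff item $j \in A_s$) of discrepancy at most $c$. When utilities are binary, the ``up to $c$ goods'' condition is equivalent to the clean additive bound $|u_a(A_s) - u_a(A_{s'})| \leq c$ for every agent $a$ and every pair $s, s' \in [k]$: the ``$\Rightarrow$'' direction uses $u_a(B) \leq |B| \leq c$, and the ``$\Leftarrow$'' direction picks $B$ to be up to $c$ items from the larger bundle each having utility $1$ for $a$. Averaging over $s'$ then yields $|u_a(A_s) - u_a(G)/k| \leq \tfrac{k-1}{k} c \leq c$ for every $s$ and $a$, which in matrix form reads $\|\bA(\tfrac{1}{k}\bone - \bone(A_s))\|_\infty \leq c$ for all $s \in [k]$. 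By \Cref{def:k-disc}, this is precisely the statement $\disc(\bA, k) \leq c$.

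Combining with the guarantee $\disc(\bA, k) \geq \sqrt{n-1}/16$ from \Cref{cor:main-multicolor} gives $c \geq \sqrt{n-1}/16 = \Omega(\sqrt{n})$, as claimed. The only step that requires care is the equivalence between the combinatorial CD$_k c$ statement and the additive bound; the small subtlety is that the removable set $B$ is allowed to range over all of $G$, but for binary utilities one can always restrict to $B \subseteq A_{i'}$ without loss. I do not expect any further obstacle, since the reduction inherits the $\Omega(\sqrt{n})$ rate from \Cref{cor:main-multicolor} without any asymptotic loss (beyond a harmless $\tfrac{k-1}{k}$ factor in the averaging step).
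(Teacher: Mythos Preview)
Your proposal is correct and takes essentially the same approach as the paper: the paper invokes the reduction $\ccdiv(n) \geq \exdisc(n, k)$ from \cite{ManurangsiS22} (stated here as the third bullet of \Cref{thm:known-red}) as a black box and combines it with \Cref{cor:main-multicolor}, whereas you spell out that same reduction explicitly (rows of $\bA$ as binary utilities, CD$_k c$ gives pairwise additive bounds, then average over bundles). The argument and the resulting constant are identical.
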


\begin{corollary} \label{cor:ef}
For positive integers $k \geq 2$ and $n_1 \geq \dots \geq n_k$, $\cef(n_1, \dots, n_k) \geq \Omega(\sqrt{n_1})$.
\end{corollary}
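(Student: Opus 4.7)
My plan is to reduce to the $1/k$-weighted discrepancy lower bound (\Cref{thm:main-weighted}) via a construction that combines ``primal'', ``complementary'', and ``uniform'' agents, in the spirit of the reductions of~\cite{ManurangsiS22}.

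Let $\bA \in \{0,1\}^{\lfloor n_1/2 \rfloor \times m}$ be the matrix furnished by \Cref{thm:main-weighted}, so that $\wdisc_{1/k}(\bA) \geq \Omega(\sqrt{n_1})$, and take the item set to be $G = [m]$. In group $1$, I put one \emph{primal} agent with utility $u(g) = \bA_{j,g}$ and one \emph{complementary} agent with utility $u(g) = 1 - \bA_{j,g}$ for each $j \in [\lfloor n_1/2 \rfloor]$ (padding with a single zero-utility dummy if $n_1$ is odd so that group $1$ has exactly $n_1$ agents). In each group $i \in \{2, \dots, k\}$, all $n_i$ agents receive the \emph{uniform} utility $u(g) = 1$.

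The central calculation is to show that any EF$c$ allocation $(A_1, \dots, A_k)$ forces $\bx := \bone(A_1)$ to be a good $1/k$-weighted solution for $\bA$. The uniform agents in each group $i \geq 2$ yield $|A_i| \geq |A_1| - c$. Each primal agent at index $j$ yields $\bA_j \cdot \bone(A_i) \leq \bA_j \cdot \bone(A_1) + c$ for every $i$. Each complementary agent at index $j$, via EF$c$ applied to $\bone - \bA_j$, yields $\bA_j \cdot \bone(A_i) \geq \bA_j \cdot \bone(A_1) + (|A_i| - |A_1|) - c \geq \bA_j \cdot \bone(A_1) - 2c$ for $i \geq 2$. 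Summing these two-sided bounds over $i \in [k]$ and using $\sum_i \bA_j \cdot \bone(A_i) = \bA_j \cdot \bone$ gives $|\bA_j \cdot \bone(A_1) - (\bA_j \cdot \bone)/k| \leq 2c$ for every $j$, hence $\wdisc_{1/k}(\bA) \leq 2c$. Invoking \Cref{thm:main-weighted} then yields $c \geq \Omega(\sqrt{n_1})$.

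The main obstacle I anticipate is the one-sidedness of the EF$c$ inequality: the primal agents alone only control $\bA_j \cdot \bone(A_1)$ from below by $(\bA_j \cdot \bone)/k - O(c)$, which is insufficient to invoke $\wdisc_{1/k}$. The complementary agents are exactly what supplies the matching upper bound, but this only helps once the $|A_i|$'s are pinned to be comparable---hence the uniform fillers in groups $2, \dots, k$. A minor technical wrinkle is that the Hadamard-based $\bA$ from \Cref{lem:hadamard-discrepancy} has an all-ones first row, so its complementary agent is the zero-utility agent and contributes no extra constraint; the corresponding primal agent is itself uniform, however, so combined with the other uniform agents it still pins $|A_1|$ to $m/k \pm O(c)$ and the argument goes through unchanged.
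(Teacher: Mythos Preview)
Your proposal is correct and uses the same construction as the paper (primal and complementary agents in group~$1$, uniform agents in groups $2,\dots,k$), arriving at the same quantitative conclusion $c \geq \tfrac{1}{2}\,\exwdisc_{1/k}(\lfloor n_1/2\rfloor)$. The analyses differ slightly in flavor: the paper first observes that EF$c$ implies PROP$c$, invokes \Cref{lem:prop-to-disc} to get $\bigl|u^{(1,j)}(A_1)-u^{(1,j)}(G)/k\bigr|\leq c+[|A_1|-m/k]_+$, and then derives a contradiction by pigeonholing some $|A_i|<m/k$ against the uniform agent in group~$i$; you instead apply the EF$c$ inequalities directly (group~$1$ versus each group~$i$) and average over~$i$ to bound $\|\bA(\tfrac{1}{k}\bone-\bone(A_1))\|_\infty$. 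Your route is arguably more elementary since it avoids the PROP$c$ detour, while the paper's route gets to reuse \Cref{lem:prop-to-disc}, which it needs anyway for the PROP lower bound. One minor remark: your final paragraph about the all-ones Hadamard row is an unnecessary worry---you can, as the paper does, simply take $\bA$ to be \emph{any} matrix with $\wdisc_{1/k}(\bA)\geq\Delta$, so no row-by-row case analysis is needed.
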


\begin{corollary} \label{cor:prop}
For positive integers $k \geq 2$ and $n_1 \geq \dots \geq n_k$, $\cprop(n_1, \dots, n_k) \geq $ \\ $\Omega\left(\max_{i^* \in [k]} 
 \frac{i^*}{k} \cdot \sqrt{n_{i^*}}\right)$.
\end{corollary}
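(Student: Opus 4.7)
I will derive Corollary~\ref{cor:prop} from Theorem~\ref{thm:main-weighted} via a reduction in the style of \cite{ManurangsiS22}. It suffices to show, for each fixed $i^* \in [k]$, that $\cprop(n_1,\ldots,n_k) \geq \Omega((i^*/k)\sqrt{n_{i^*}})$; taking the maximum over $i^*$ then gives the claimed bound.

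The starting ingredient is the matrix $\bA \in \{0,1\}^{n_{i^*}\times m}$ from Theorem~\ref{thm:main-weighted} with parameter $p = i^*/k$, so that $\wdisc_{i^*/k}(\bA) \geq \Delta := \sqrt{n_{i^*}-1}/16$. I use the $m$ columns of $\bA$ as the ``valuable'' items and introduce a separate pool of auxiliary items for size control. In each of the first $i^*$ groups (which have at least $n_{i^*}$ slots by $n_1 \geq \cdots \geq n_{i^*}$), I place one \emph{hard} agent per row of $\bA$, with utility equal to that row, together with a complementary \emph{anti-hard} agent with utility $\bone$ minus the row. In each of the last $k - i^*$ groups I place a \emph{uniform dummy} agent supported on the auxiliary pool. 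The hard agents will give a lower bound on the ``union mass'' $u(U)$ for $U := A_1 \cup \cdots \cup A_{i^*}$; the anti-hard agents will upgrade the one-sided PROP$c$ guarantee into a two-sided control on the deviation $u(U) - (i^*/k) u(G)$; and the uniform dummies will force $|U|$ close to $(i^*/k)|G|$.

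Concretely, given a PROP$c$ allocation $A = (A_1,\ldots,A_k)$, summing the hard-agent constraints across the first $i^*$ groups gives $u(U) \geq (i^*/k) u(G) - i^* c$ for every row $u$ of $\bA$, while the anti-hard agents yield the matching inequality $u(U) \leq (i^*/k) u(G) + (|U| - (i^*/k)|G|) + i^* c$, and the dummies bound $|U| - (i^*/k)|G|$ by a small multiple of $c$. The goal is to combine these to
\begin{align*}
|u(U) - (i^*/k)\, u(G)| \;\leq\; O\!\left((k/i^*)\cdot c\right) \quad \text{for every row $u$ of $\bA$.}
\end{align*}
Since Theorem~\ref{thm:main-weighted} supplies some row with $|u(U) - (i^*/k) u(G)| \geq \Delta$, this forces $(k/i^*)\cdot c \gtrsim \Delta$, i.e.\ $c \geq \Omega((i^*/k)\sqrt{n_{i^*}})$.

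The hard part is achieving the $(k/i^*)$ factor in the deviation bound. A direct summation contributes $O(i^* c)$ from the $i^*$ hard and anti-hard constraints and $O((k-i^*) c)$ from the dummy-induced control on $|U|$; combined naively these give only $O(kc)$, which yields the weaker $\Omega(\sqrt{n_{i^*}}/k)$ bound and matches the claim only in the $i^* = 1$ regime. Closing this gap -- the ``strengthening'' over \cite{ManurangsiS22} that is referenced in the introduction -- should come from a careful calibration of the number of duplicated columns of $\bA$ and the size of the auxiliary bundle, in the same spirit as the amplification by $\lfloor 1/(2p)\rfloor$ horizontal copies of $\bW$ in the proof of Theorem~\ref{thm:main-weighted}, so that both the anti-hard and dummy contributions bind at scale $(k/i^*) c$ rather than $i^* c$ and $(k-i^*) c$. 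Verifying that all three agent types remain mutually consistent under this calibration is the step I expect to require the most care.
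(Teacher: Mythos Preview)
Your construction is essentially the paper's (hard and complementary anti-hard agents in each of the first $i^*$ groups, all-one dummies in the remaining groups; the separate ``auxiliary pool'' is unnecessary). The gap is in the analysis, and the fix you are groping for via ``calibration'' is not where the $i^*/k$ factor comes from.

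The mistake is to sum the PROP$c$ inequalities over $i\in[i^*]$ first and then invoke an $i^*/k$-weighted discrepancy bound on the union $U=A_1\cup\cdots\cup A_{i^*}$. That summation irrevocably introduces $i^* c$ of slack, and no amount of column duplication or item-count tuning removes it: the loss is in the order of operations, not in the instance. The paper instead uses $1/k$-weighted discrepancy and applies it to each bundle $A_i$ \emph{separately}. Since every group $i\le i^*$ contains all the hard/anti-hard pairs, Lemma~\ref{lem:prop-to-disc} gives, for each such $i$,
\[
c+\bigl[|A_i|-\tfrac{m}{k}\bigr]_+\ \ge\ \left\|\bA\!\left(\tfrac{1}{k}\bone-\bone(A_i)\right)\right\|_\infty\ \ge\ \Delta,
\]
so $|A_i|>m/k+(\Delta-c)$ for \emph{every} $i\in[i^*]$. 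Only now do you sum: the first $i^*$ bundles together exceed $(i^*/k)m+i^*(\Delta-c)$, hence by pigeonhole some later bundle has size below $m/k-\frac{i^*}{k-i^*}(\Delta-c)$. If $c<(i^*/k)\Delta$ this is below $m/k-c$, contradicting PROP$c$ for the all-one agent in that group.

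In short: take $p=1/k$, not $p=i^*/k$; apply discrepancy per group to force each $|A_i|$ large; and sum the \emph{size} conclusions afterward. The multiplicative gain of $i^*$ comes from having $i^*$ independent groups each overshooting $m/k$ by $\Delta-c$, not from any amplification trick on the matrix.
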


\begin{figure}
\begin{center}
\begin{tabular}{ |c|c|c|c|} 
 \hline
  Quantity & Our Lower Bounds & Bounds in \cite{caragiannis2025newlowerboundmulticolor} & Bounds in \cite{ManurangsiS22}  \\ 
  \hline
  $\ccdiv(n)$ & $\Omega(\sqrt{n})$ & $\Omega\left(\sqrt{\frac{n}{\log k}}\right)$ & $\Omega\left(\sqrt{\frac{n}{k}}\right)$ \\
 $\cef(n_1, \dots, n_k)$ & $\Omega(\sqrt{n_1})$ & $\Omega\left(\sqrt{\frac{n}{k \log k}}\right)$ & $\Omega\left(\sqrt{\frac{n_1}{k^3}}\right)$ \\ 
 $\cprop(n_1, \dots, n_k)$ & $\Omega\left(\max_{i^* \in [k]} 
 \frac{i^*}{k} \cdot \sqrt{n_{i^*}}\right)$ & $\Omega\left(\sqrt{\frac{n}{k^3 \log k}}\right)$, $\Omega\left(\sqrt{\frac{n_k}{\log k}}\right)$ & $\Omega\left(\sqrt{\frac{n_1}{k^3}}\right)$ \\ 
 \hline
\end{tabular}
\end{center}
\caption{Summary of our lower bounds for group fair division and known lower bounds. Here, we assume $n_1 \geq \cdots \geq n_k \geq 1$. As discussed below, our lower bound is a strict improvement over previous lower bounds by a factor of $\Omega(\log k)$ for \emph{all} values of $n_1, \dots, n_k$. Moreover, the improvement for $\cef$ and $\cprop$ can be as large as $\Omega(\sqrt{k \log k})$ and $\Omega(k\sqrt{\log k})$ respectively for \emph{some} values of $n_1, \dots, n_k$. Finally, note also that the previous three lower bounds for $\cprop$ are incomparable.} \label{tab:group-fair-bounds}
\end{figure}

Our lower bounds and those from \cite{ManurangsiS22,caragiannis2025newlowerboundmulticolor} are summarized in \Cref{tab:group-fair-bounds}. 
Our lower bound for $\ccdiv(n)$ is asymptotically tight and improves upon the $\Omega(\sqrt{n/\log k})$ lower bound from \cite{caragiannis2025newlowerboundmulticolor}. Our lower bound for $\cef(n_1, \dots, n_k)$ improves upon the $\Omega\left(\sqrt{\frac{n}{k \log k}}\right)$ lower bound from \cite{caragiannis2025newlowerboundmulticolor} since $n_1 \geq n/k$; moreover, in the case where $n_1 = \Omega(n)$, our improvement can be as large as $\Omega(\sqrt{k \log k})$. Finally, our lower bound $\cprop(n_1, \dots, n_k)$ is also an improvement over all three (incomparable) lower bounds from~\cite{ManurangsiS22,caragiannis2025newlowerboundmulticolor} by a factor of at least $\Omega(\sqrt{\log k})$ in all regime of parameters; moreover, when e.g. $n_1, \cdots, n_{\lfloor k/2 \rfloor} = \Theta(n/k)$ and $n_{\lfloor k/2 \rfloor + 1}, \dots, n_k = O(1)$, our lower bound of $\Omega\left(\sqrt{\frac{n}{k}}\right)$ is an $\Omega(k\sqrt{\log k})$ factor improvement over the previous bounds.

\subsection{Consensus $1/k$-Division}
In~\cite{ManurangsiS22}, it was shown that $\cef(n_1, \dots, n_k), \cprop(n_1, \dots, n_k)$ and $\ccdiv(n)$ are all at most $O(\sqrt{n})$. Furthermore, they provide a lower bound through the following:
\begin{theorem}[\cite{ManurangsiS22}] \label{thm:known-red}
For positive integers $k \geq 2$ and $n_1 \geq \dots \geq n_k$ (with $n = n_1 + \cdots + n_k)$,
\begin{itemize}
\item $\cef(n_1, \dots, n_k) \geq \exwdisc_{1/k}\left(\lfloor n_1/2\rfloor\right) / k$
\item $\cprop(n_1, \dots, n_k) \geq \exwdisc_{1/k}\left(\lfloor n_1/2\rfloor\right) / k$
\item $\ccdiv(n) \geq \exdisc(n, k)$.
\end{itemize}
\end{theorem}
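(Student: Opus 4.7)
The plan is a direct reduction from the appropriate discrepancy quantity to each of the three fair-division quantities. In every case the items are the columns of a witness matrix $\bA$, some agents are given utilities equal to rows (or complements of rows) of $\bA$, and ``dummy'' agents in the remaining groups pin down free parameters; the fairness guarantee then translates into a bound on a relevant discrepancy expression, and maximizing over $\bA$ yields the desired lower bound.

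For the CD$_k$ part, I would take $\bA \in [0,1]^{n \times m}$ with $\disc(\bA, k)$ arbitrarily close to $\exdisc(n, k)$ and build an instance with $n$ agents whose utility vectors are the rows of $\bA$ (how they are split across the $k$ groups is irrelevant for this notion). Since every item is valued at most $1$, a CD$_k$c allocation $A_1, \dots, A_k$ satisfies $u_j(A_{s'}) - u_j(A_s) \leq c$ for all agents $j$ and all pairs $s, s'$, and averaging over $s'$ gives $|u_j(A_s) - u_j(G)/k| \leq (k-1)c/k \leq c$. So the induced coloring $\chi : [m] \to [k]$ has $k$-color discrepancy at most $c$, yielding $c \geq \disc(\bA, k)$ and hence $\ccdiv(n) \geq \exdisc(n, k)$.

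For EF$c$ and PROP$c$ I would reduce from $\bA \in [0,1]^{N \times m}$ with $N = \lfloor n_1/2 \rfloor$ and $\wdisc_{1/k}(\bA)$ close to $\exwdisc_{1/k}(N)$. In group $1$, for each row $j \in [N]$ I place two agents: $a_j$ with utility $\bA_{j,\cdot}$ and $b_j$ with the complementary utility $\bone - \bA_{j,\cdot}$, using $2N \leq n_1$ slots (any leftover agents have zero utility). Every remaining group $i \geq 2$ is populated with ``unit'' dummies who value every item at $1$. Under an EF$c$ allocation, $a_j$'s envy-freeness gives $u_{a_j}(A_i) - u_{a_j}(A_1) \leq c$ for all $i \neq 1$, summing to $u_{a_j}(A_1) \geq u_{a_j}(G)/k - (k-1)c/k$. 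Rewriting $b_j$'s envy-freeness in terms of $u_{a_j}$ gives $u_{a_j}(A_1) - u_{a_j}(A_i) \leq |A_1| - |A_i| + c$, summing to $u_{a_j}(A_1) \leq u_{a_j}(G)/k + (|A_1| - m/k) + (k-1)c/k$; and each unit-dummy's envy-freeness toward group $1$ forces $|A_1| \leq |A_i| + c$, summing to $|A_1| - m/k \leq (k-1)c/k$. Chaining the three inequalities gives $|u_{a_j}(A_1) - u_{a_j}(G)/k| \leq 2(k-1)c/k \leq kc$, so with $\bx := \bone(A_1)$ one obtains $\wdisc_{1/k}(\bA) \leq kc$, which is the claimed bound for $\cef$. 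The PROP$c$ argument is analogous: $a_j$'s and $b_j$'s own-bundle proportionality sandwich $u_{a_j}(A_1)$ around $u_{a_j}(G)/k$ up to a $|A_1| - m/k$ slack, and the unit-dummies in groups $2, \dots, k$ eliminate that slack via $|A_i| \geq m/k - c$, summing to $|A_1| \leq m/k + (k-1)c$.

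The main conceptual hurdle is obtaining both directions of the bound $|u_{a_j}(A_1) - u_{a_j}(G)/k| \leq O(c)$: a single copy of each row of $\bA$ yields only one-sided envy or proportionality inequalities, so without the complementary agents $b_j$ one cannot extract the second inequality needed for $\wdisc_{1/k}$. The unit-dummies in the remaining groups are then the minimal gadget that removes the free parameter $|A_1|$ introduced by $b_j$'s inequality. Once these two ingredients are in place, the rest is routine bookkeeping of constants.
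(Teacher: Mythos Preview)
Your proposal is correct. The paper does not give its own proof of this theorem (it is quoted from \cite{ManurangsiS22}), but the constructions you describe are exactly those the paper uses when it later proves the strengthened versions \Cref{thm:prop-reduction} (at $i^*=1$) and \Cref{thm:ef-reduction}: rows of a witness matrix and their complements placed in group~$1$, all-one ``unit'' dummies in the remaining groups, and the complementary pairs yielding the two-sided bound encapsulated there as \Cref{lem:prop-to-disc}. Your CD$_k$ argument (average the pairwise bounds $u_a(A_{i'})-u_a(A_i)\le c$ over $i'$ and over $i$) is the standard one.

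One remark worth recording: in your EF analysis you obtain $|u_{a_j}(A_1)-u_{a_j}(G)/k|\le 2(k-1)c/k$ and then relax this to $kc$ to match the stated bound. If instead you keep the sharper constant, you get $c\ge \tfrac{k}{2(k-1)}\cdot\wdisc_{1/k}(\bA)\ge \tfrac12\,\wdisc_{1/k}(\bA)$, which is precisely the improved \Cref{thm:ef-reduction}. The paper reaches the same conclusion via a contradiction argument (using \Cref{lem:prop-to-disc} to force $|A_1|>m/k+c$ and then a single unit-dummy's envy toward group~$1$), but your averaging route gets there just as cleanly.
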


Combining \Cref{cor:main-multicolor} with the above theorem, we immediately arrive at the  the tight lower bound for $\ccdiv(n)$ (\Cref{cor:cd}).

%For $\cef$, our lower bound improves those from \cite{ManurangsiS22} by a factor of $\sqrt{k}$. However, it is incomparable to those in \cite{caragiannis2025newlowerboundmulticolor} who proved a lower bound of $\Omega\left(\frac{\min\{n_1, \dots, n_k\}}{\log k}\right)$ which can be larger when $\min\{n_1, \dots, n_k\}$ is sufficiently large. Finally, our lower bound for $\cef$ is \emph{strictly weaker} than that of \cite{caragiannis2025newlowerboundmulticolor}. The reason for this is that \cite{caragiannis2025newlowerboundmulticolor} proved these lower bounds explicitly without relating them to discrepancy quantities. \textcolor{red}{TODO(pasin): Check if we can improve the lower bound somehow, especially for $\cef$...}

\subsection{Improved Bounds for EF and PROP}

Interestingly, however, applying the known reduction (\Cref{thm:known-red}) to our bound on $1/k$-weighted discrepancy (\Cref{thm:main-weighted}) does \emph{not} yet yield strong lower bounds for $\cef$ and $\cprop$. In fact, the bound we would have gotten for $\cef$ is strictly worse than that from \cite{caragiannis2025newlowerboundmulticolor} in all regimes of parameters. To overcome this barrier, we give better lower bounds for $\cef$ and $\cprop$ based on $1/k$-weighted discrepancy, as stated below.

\begin{theorem} \label{thm:prop-reduction}
For positive integers $k \geq 2$ and $n_1 \geq \dots \geq n_k$, \begin{align*}
\cprop(n_1, \dots, n_k) \geq \max_{i^* \in [k]} \frac{i^*}{k} \cdot \exwdisc_{1/k}\left(\left\lfloor \frac{n_{i^*}}{2}\right\rfloor\right).
\end{align*}
\end{theorem}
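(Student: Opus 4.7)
The plan is to strengthen the reduction of \Cref{thm:known-red} by (i) using all groups $1, \dots, i^*$ simultaneously and (ii) adding \emph{complementary} row-agents, so that the PROP$c$ constraints pin $\bA(\tfrac{1}{k}\bone - \bx_i)$ from both sides rather than only from one. Fix $i^* \in [k]$, set $n' := \lfloor n_{i^*}/2 \rfloor$, and take $\bA \in \{0,1\}^{n' \times m}$ with $\wdisc_{1/k}(\bA)$ arbitrarily close to $\exwdisc_{1/k}(n')$. I would use the columns of $\bA$ as the items. For each row $r \in [n']$, define two utility profiles: $u^{\alpha_r}(g_j) := \bA_{r,j}$ and $u^{\beta_r}(g_j) := 1 - \bA_{r,j}$. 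In each group $i \in [i^*]$, I place one $\alpha_r$- and one $\beta_r$-agent for every $r$, using $2n' \leq n_{i^*} \leq n_i$ slots; in each group $i > i^*$, I place one ``uniform'' agent with utility identically $1$; all leftover seats are filled with utility-$0$ dummies, which trivially satisfy PROP$c$.

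Next, I would translate PROP$c$ into linear inequalities on $\bx_i := \bone(A_i) \in \{0,1\}^m$ and $r_i := |A_i|$. Since all utilities lie in $[0,1]$, PROP$c$ implies $u(A_i) \geq u(G)/k - c$ for every agent. For $i \in [i^*]$, $\alpha_r$ gives $\bA_r(\bx_i - \tfrac{1}{k}\bone) \geq -c$, while $\beta_r$ gives $\bA_r(\bx_i - \tfrac{1}{k}\bone) \leq r_i - \tfrac{m}{k} + c$; for $i > i^*$, the uniform agent forces $r_i \geq m/k - c$. Summing this last inequality over $i > i^*$ and using $\sum_i r_i = m$ yields $\sum_{i \in [i^*]} r_i \leq \tfrac{i^*}{k}m + (k-i^*)c$, so by averaging there is some $i^\circ \in [i^*]$ with $r_{i^\circ} \leq \tfrac{m}{k} + \tfrac{k-i^*}{i^*}c$ (trivially so when $i^* = k$, where no uniform agents exist).

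Plugging $i = i^\circ$ into the two row-agent inequalities now bounds $|\bA_r(\tfrac{1}{k}\bone - \bx_{i^\circ})|$ by $\tfrac{k}{i^*}c$ for every $r$, i.e.\ $\|\bA(\tfrac{1}{k}\bone - \bx_{i^\circ})\|_\infty \leq \tfrac{k}{i^*}c$. Since $\bx_{i^\circ} \in \{0,1\}^m$, the definition of $\wdisc_{1/k}$ forces this norm to be at least $\wdisc_{1/k}(\bA)$, so any PROP$c$ allocation must satisfy $c \geq \tfrac{i^*}{k}\wdisc_{1/k}(\bA)$. Letting $\wdisc_{1/k}(\bA) \to \exwdisc_{1/k}(n')$ and maximizing over $i^*$ then yields the theorem.

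The main obstacle is engineering two-sided control on $\bA_r(\bx_{i^\circ} - \tfrac{1}{k}\bone)$; with only $\alpha_r$-agents (the naive lift of \Cref{thm:known-red}) the $\wdisc_{1/k}$ bound may be absorbed in the ``surplus'' direction and yields just a $1/k$ factor. The complementary $\beta_r$-agents supply the matching upper bound, and the uniform agents in the last $k-i^*$ groups are what pin $r_{i^\circ}$ near $m/k$; without them $r_{i^\circ}$ could grow as large as $m/i^*$, wiping out the factor-$k/i^*$ gain. Once this agent design is in place, the rest is routine averaging plus a single invocation of the weighted-discrepancy bound.
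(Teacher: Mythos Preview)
Your proposal is correct and follows essentially the same approach as the paper: the instance is identical (complementary row-agents $\alpha_r,\beta_r$ in each of the first $i^*$ groups, an all-$1$ agent in each of the remaining $k-i^*$ groups), and your direct averaging argument is the contrapositive of the paper's contradiction argument that all $|A_i|$ for $i\in[i^*]$ would otherwise be too large. One cosmetic fix: since $\exwdisc_{1/k}(n')$ is defined as a supremum over $\bA\in[0,1]^{n'\times m}$, take $\bA\in[0,1]^{n'\times m}$ rather than $\{0,1\}^{n'\times m}$; your construction and inequalities go through verbatim.
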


\begin{theorem} \label{thm:ef-reduction}
For positive integers $k \geq 2$ and $n_1 \geq \dots \geq n_k$, $\cef(n_1, \dots, n_k) \geq \frac{\exwdisc_{1/k}\left(\lfloor \frac{n_1}{2}\rfloor\right)}{2}$.
\end{theorem}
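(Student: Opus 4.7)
The plan is to construct, from a near-worst-case matrix for $1/k$-weighted discrepancy, a group fair division instance in which EF$c$ forces $c \geq \exwdisc_{1/k}(\lfloor n_1/2 \rfloor)/2$. The reduction refines the one underlying the first bullet of \Cref{thm:known-red}; the key new ingredients are a ``flipped-copy'' trick that turns the one-sided envy inequality into a two-sided discrepancy inequality, together with a single all-ones agent in each group $i \neq 1$ used to control $|A_1|$.

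Concretely, set $N = \lfloor n_1 / 2 \rfloor$ and, for arbitrary $\eps > 0$, choose $\bA^* \in [0,1]^{N \times m}$ with $\wdisc_{1/k}(\bA^*) \geq \exwdisc_{1/k}(N) - \eps$. Take the item set to be $[m]$. In group $1$, for $j \in [N]$ I give $a^{(1,j)}$ the utility $g_\ell \mapsto \bA^*_{j,\ell}$ (a ``positive copy'' of row $j$) and $a^{(1,N+j)}$ the utility $g_\ell \mapsto 1 - \bA^*_{j,\ell}$ (the ``flipped copy''); any leftover group-$1$ agent gets the zero utility, which is legal since $n_1 \geq 2N$. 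In each group $i \in \{2,\dots,k\}$, I give one agent the all-ones utility and the rest the zero utility.

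Now I would fix an EF$c$ allocation $A = (A_1,\dots,A_k)$ and use the standard fact that EF$c$ for utilities in $[0,1]$ implies $u(A_i) \geq u(G)/k - c$ for every agent in group $i$. Applied to the all-ones agents in groups $2,\dots,k$, this yields $|A_i| \geq |A_1| - c$ for each $i \neq 1$; summing shows $|A_1| \leq m/k + c$. Applied to $a^{(1,j)}$ and $a^{(1,N+j)}$, a short rearrangement gives
$$-c \;\leq\; \bA^*_{j,\cdot}\bigl(\bone(A_1) - \tfrac{1}{k}\bone\bigr) \;\leq\; \bigl(|A_1| - m/k\bigr) + c.$$
Plugging in the bound on $|A_1|$ shows $|\bA^*_{j,\cdot}(\bone(A_1) - \bone/k)| \leq 2c$ for every $j$, hence $\wdisc_{1/k}(\bA^*) \leq 2c$. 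Letting $\eps \to 0$ gives the theorem.

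The main conceptual obstacle---and the precise reason the naive reduction in \Cref{thm:known-red} loses a factor $k$---is that EF$c$ supplies only one-sided lower bounds on $u(A_1)$, whereas weighted discrepancy is an intrinsically two-sided absolute-value quantity. Replacing $\bA^*_{j,\cdot}$ with $\bone - \bA^*_{j,\cdot}$ converts the second envy bound into an upper bound on $\bA^*_{j,\cdot}(\bone(A_1) - \bone/k)$, at the cost of a residual $|A_1| - m/k$; the all-ones agents in the other groups are introduced precisely to absorb this residual into a second additive $c$, so no factor of $k$ appears anywhere in the analysis.
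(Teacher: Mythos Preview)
Your proof is correct and matches the paper's almost line for line: the same construction (complement pairs in group $1$, an all-ones agent in each other group), the same two-sided bound (which the paper isolates as \Cref{lem:prop-to-disc}), and the paper merely phrases the endgame as a contradiction rather than a direct inequality. One small mis-citation: the inequality $|A_i| \geq |A_1| - c$ for $i \geq 2$ does not follow from the PROP$c$ consequence $u(A_i) \geq u(G)/k - c$ you invoke (that gives only $|A_i| \geq m/k - c$); it follows directly from EF$c$ for the all-ones agent in group $i$ comparing $A_i$ to $A_1$, and with that fix everything goes through.
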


Combining these theorems with \Cref{thm:main-weighted} immediately yield \Cref{cor:prop,cor:ef}.

We note that our lower bound in \Cref{thm:prop-reduction} for $\cprop$ is a generalization of that from \cite{ManurangsiS22}, which corresponds to the case $i^* = 1$. Meanwhile, our lower bound for $\cef$ in \Cref{thm:ef-reduction} is a direct improvement of that from \cite{ManurangsiS22} by a factor of $\Omega(k)$.

In our constructions, the utility of each item will always be at most 1. Throughout this section, we will assume this without stating it explicitly. We say that a utility $u$ is a \emph{complement} of a utility $u'$ iff $u(g) = 1 - u(g)$ for all $g \in G$. Similarly, an agent is a complement of another if their utilities are complements. We start with the following lemma, which will be useful in subsequent proofs. This lemma is a generalization of a property shown in \cite[Theorem 4.1]{ManurangsiS22}.

\begin{lemma} \label{lem:prop-to-disc}
Let $A = (A_1, \dots, A_g)$ be any PROP$c$ allocation. Suppose that a group $i$ contains two agents $j, j'$ that are complements. Then, we have
\begin{align*}
\left|u^{(i, j)}(A_i) - \frac{u^{(i,j)}(G)}{k}\right| \leq c + \left[|A_i| - \frac{m}{k}\right]_+.
\end{align*}
\end{lemma}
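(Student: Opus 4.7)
The plan is a clean two-sided bound, applying the PROP$c$ guarantee to both agent $(i,j)$ and its complement $(i,j')$, and translating the latter via the complementarity identity.

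First, I would apply PROP$c$ directly to agent $(i,j)$: there exists $B \subseteq G \setminus A_i$ with $|B| \leq c$ such that $u^{(i,j)}(A_i) \geq u^{(i,j)}(G)/k - u^{(i,j)}(B)$. Since each item has utility at most $1$, we get $u^{(i,j)}(B) \leq c$, so
\[
\frac{u^{(i,j)}(G)}{k} - u^{(i,j)}(A_i) \leq c.
\]
This already handles the ``lower'' side of the absolute value in the lemma (since $c \leq c + [|A_i| - m/k]_+$).

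Second, I would apply PROP$c$ to the complement agent $(i,j')$, which gives some $B' \subseteq G \setminus A_i$ with $|B'| \leq c$ and $u^{(i,j')}(A_i) \geq u^{(i,j')}(G)/k - u^{(i,j')}(B') \geq u^{(i,j')}(G)/k - c$. The key identity is that for complements, $u^{(i,j')}(S) = |S| - u^{(i,j)}(S)$ for every $S \subseteq G$. Substituting this into the inequality above (with $S = A_i$ and $S = G$, and writing $m := |G|$) yields
\[
|A_i| - u^{(i,j)}(A_i) \geq \frac{m - u^{(i,j)}(G)}{k} - c,
\]
which rearranges to
\[
u^{(i,j)}(A_i) - \frac{u^{(i,j)}(G)}{k} \leq c + \left(|A_i| - \frac{m}{k}\right) \leq c + \left[|A_i| - \frac{m}{k}\right]_+.
\]

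Combining both bounds yields the claimed inequality. There is no real obstacle here: the whole argument rests on the trivial observation that PROP$c$ for the complement agent converts into an \emph{upper} bound for the original agent, and the slack $[|A_i| - m/k]_+$ appears precisely because the complement translation introduces a $|A_i| - m/k$ term that could be positive. The only mild care needed is to ensure the utility-at-most-one assumption (stated at the start of the section) is invoked when bounding $u^{(i,j)}(B)$ and $u^{(i,j')}(B')$ by $c$.
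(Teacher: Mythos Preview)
Your proof is correct and essentially identical to the paper's: both apply PROP$c$ to agent $(i,j)$ for the lower bound and to the complement $(i,j')$ for the upper bound, using the identity $u^{(i,j')}(S)=|S|-u^{(i,j)}(S)$ to translate the latter into the desired inequality with the $|A_i|-m/k$ slack. The only cosmetic difference is that you explicitly spell out why the $[\cdot]_+$ absorbs both sides, whereas the paper leaves this to ``combining the two inequalities.''
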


\begin{proof}
Since $A$ is PROP$c$ and the valuation of each item is at most one, we have
$u^{(i, j)}(A_i) \geq \frac{u^{(i,j)}(G)}{k} - c$  and $u^{(i, j')}(A_i) \geq \frac{u^{(i,j')}(G)}{k} - c.$
The former is equivalent to
$u^{(i, j)}(A_i) - \frac{u^{(i,j)}(G)}{k} \geq -c.$
Moreover, Since $u^{(i, j')}$ is a complement to $u^{(i, j)}$, the latter can be rearranged as
\begin{align*}
c &\geq \frac{m - u^{(i,j)}(G)}{k} - \left(|A_i| - u^{(i, j)}(A_i)\right)
= \left(\frac{m}{k} - |A_i|\right) + u^{(i, j)}(A_i) - \frac{u^{(i,j)}(G)}{k}.
\end{align*}
Combining these two inequalities yields the claim.
%, we have
%\begin{align*}
%\left|u^{(i, j)}(A_i) - \frac{u^{(i,j)}(G)}{k}\right| \leq c + \left[|A_i| - \frac{m}{k}\right]_+. & & &\qedhere
%\end{align*}
\end{proof}

\paragraph*{Construction for PROP$c$.}
\Cref{lem:prop-to-disc} provides a clear intuition for the construction: We use the agents' utilities in the first $i^*$ groups to encode the matrix (and its complement). In any PROP$c$ allocation $A = (A_1, \dots, A_k)$, if one of the first $i^*$ group has small bundle size (only slightly above $m/k$), then the lemma immediately allows us to related $c$ back to $1/k$-weighted discrepancy. To ensure the condition, we have an agent with all-1 utility in each of the remaining $k - i^*$ groups; this ensures that the bundle these groups receive cannot be too small, which implies that one of the first $i^*$ groups has a small bundle.

We note that our construction and our bound indeed coincide with that of \cite{ManurangsiS22} in the case of $i^* = 1$. Thus, ours may be viewed as a generalization of their construction.

\begin{proof}[Proof of \Cref{thm:prop-reduction}]
Suppose for the sake of contradiction that, for some $i^* \in [k]$, $c := \cprop(n_1, \dots, n_k) < \frac{i^*}{k} \cdot \Delta$ where $\Delta = \exwdisc_{1/k}(n')$ and $n' = \lfloor n_{i^*}/2 \rfloor$.

Let $\bA \in [0, 1]^{n' \times m}$ be such that $\wdisc_{1/k}(\bA) \geq \Delta$. Let $G = [m]$, and the utilities of the agents be as follows. First, for groups $i \in \{i^* + 1, \dots, k\}$, we let their first agent having an all-1 utility, i.e. $u^{(i,1)}(g) = 1$ for all $g \in [m]$. For each of the remaining group $i \in [i^*]$ and for all $j \in [n']$, we set the utility for the $j$-th and $(j + n')$-th agents as follows:\footnote{The remaining agents' utilities can be set arbitrarily.}
\begin{align*}
u^{(i,j)}(g) = \bA_{j,g} & & u^{(i,j+n')}(g) = 1 - \bA_{j,g} & &\forall g \in [m].
\end{align*}

Let $A = (A_1, \dots, A_k)$ be an allocation that is PROP$c$. For all $i \in [i^*]$, since $u^{(i,j)}(g)$ and $u^{(i,j+n')}(g)$ are complement, we can apply \Cref{lem:prop-to-disc} to conclude that
\begin{align*}
c + \left[|A_i| - \frac{m}{k}\right]_+ \geq \left|u^{(i, j)}(A_i) - \frac{u^{(i,j)}(G)}{k}\right| = \left|\left(\bA\left(\frac{1}{k}\cdot\bone - \bone(A_i)\right)\right)_j\right| & &\forall j \in [n'].
\end{align*} 
Since $\wdisc_{1/k}(\bA) \geq \Delta$, we have $|A_i| - \frac{m}{k} \geq \Delta - c > \left(\frac{k}{i^*} - 1\right) \cdot c$. Since $|A_1| + \cdots |A_k| = m$ and $|A_1|, \dots, |A_{i^*}| > \frac{m}{k} + \left(\frac{k}{i^*} - 1\right) \cdot c$, there exists $i \in \{i^* + 1, \dots, k\}$ with $|A_i| < \frac{m}{k} - c$. However, this means that $u^{(i,1)}(A_i) < \frac{u^{(i,1)}(G)}{k} - c$. Thus, $A$ cannot be PROP$c$, a contradiction.
\end{proof}

\paragraph*{Construction for EF$c$.} Our construction for EF$c$ is in fact the same as that of PROP in the case $i^* = 1$ above (which is also the same as that of \cite{ManurangsiS22}). The difference is that, we use EF$c$ to ensure that $|A_1|$ is smaller than $m/k + c$. %, as otherwise an agent from some other bundle must envy it by more than $c$ goods.
This is formalized below.

\begin{proof}[Proof of \Cref{thm:ef-reduction}]
Let $n' := \lfloor n_1/2 \rfloor$ and $\Delta = \exwdisc_{1/k}(n')$. Suppose for the sake of contradiction that $c := \cef(n_1, \dots, n_k) < \Delta/2$.

Let $\bA \in [0, 1]^{n' \times m}$ be such that $\wdisc_{1/k}(\bA) \geq \Delta$. Let $G = [m]$, and the utilities of the agents be as follows. First, for groups $i \in \{2, \dots, k\}$, we let their first agent having an all-1 utility, i.e. $u^{(i,1)}(g) = 1$ for all $g \in [m]$. For the first group and for all $j \in [n']$, we set the utility for the $j$-th and $(j + n')$-th agents as follows:%\footnote{The remaining agents' utilities can be set arbitrarily.}
\begin{align*}
u^{(1,j)}(g) = \bA_{j,g} & & u^{(1,j+n')}(g) = 1 - \bA_{j,g} & &\forall g \in [m].
\end{align*}

Let $A = (A_1, \dots, A_k)$ be an EF$c$ allocation. Since any EF$c$ allocation is also PROP$c$, %Furthermore, since $u^{(1,j)}(g)$ and $u^{(1,j+n')}(g)$ are complement, 
we can apply \Cref{lem:prop-to-disc} to conclude that
\begin{align*}
c + \left[|A_1| - \frac{m}{k}\right]_+ \geq \left|u^{(1, j)}(A_1) - \frac{u^{(1,j)}(G)}{k}\right| = \left|\left(\bA\left(\frac{1}{k}\cdot\bone - \bone(A_1)\right)\right)_j\right| & &\forall j \in [n'].
\end{align*}
Since $\wdisc_{1/k}(\bA) \geq \Delta$, we have $|A_1| - \frac{m}{k} \geq \Delta - c > c$. Since $|A_1| + \cdots |A_k| = m$ and $|A_1| > m/k + c$, there exists $i \in \{2, \dots, k\}$ with $|A_i| < m/k$. However, this means that $u^{(i,1)}(A_i) < u^{(i,1)}(A_1) - c$. Thus, $A$ cannot be EF$c$, a contradiction.
\end{proof}

\section{Improved Upper Bound on PROP}
\label{sec:prop-ub}

We will prove the following bound on $\cprop$, which improves on the $O(\sqrt{n})$ bound of \cite{ManurangsiS22}.
\begin{corollary} \label{cor:prop-ub-main}
For positive integers $k \geq 2$ and $n_1 \geq \dots \geq n_k$, $\cprop(n_1, \dots, n_k) \leq O(\sqrt{n_1})$.
\end{corollary}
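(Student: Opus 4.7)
The plan is to prove \Cref{cor:prop-ub-main} by induction on $k$, via a refinement of the $O(\sqrt{n})$ argument of \cite{ManurangsiS22}. The base case $k = 1$ is immediate: allocate $G$ to the unique group, giving PROP$_0$.

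For the inductive step, with group sizes $n_1 \geq n_2 \geq \cdots \geq n_k$, I plan to first allocate a bundle $A_1$ to group $1$ using only group $1$'s agents in the discrepancy instance. Concretely, I apply the $1/k$-weighted discrepancy bound (\Cref{thm:wdisc-ub}) to the $(n_1+1) \times m$ matrix whose first $n_1$ rows are the utility vectors of group $1$'s agents and whose last row is the all-ones vector (for size control). This yields a set $A_1 \subseteq G$ with
\[ \bigl|u^{(1,j)}(A_1) - u^{(1,j)}(G)/k\bigr| \leq O(\sqrt{n_1}) \text{ for all } j \in [n_1], \qquad \bigl|\,|A_1| - m/k\,\bigr| \leq O(\sqrt{n_1}).\]
The first bound guarantees PROP$c$ for group $1$ with $c = O(\sqrt{n_1})$ (since item values are at most $1$). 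I then recursively allocate $G \setminus A_1$ among groups $2, \ldots, k$; the inductive hypothesis produces $A_2, \ldots, A_k$ satisfying the recursive guarantee $u^{(i,j)}(A_i) \geq u^{(i,j)}(G \setminus A_1)/(k-1) - O(\sqrt{n_1})$ for all $j \in [n_i]$, $i \geq 2$.

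The main obstacle is translating this recursive PROP guarantee into the original PROP$c$ requirement $u^{(i,j)}(A_i) \geq u^{(i,j)}(G)/k - O(\sqrt{n_1})$. A direct computation shows that the extra deficit introduced by the change of target equals $\bigl[u^{(i,j)}(A_1) - u^{(i,j)}(G)/k\bigr]/(k-1)$, and so I need $u^{(i,j)}(A_1) \leq u^{(i,j)}(G)/k + O(k\sqrt{n_1})$ for every agent in groups $i \geq 2$---a quantity that step~$1$'s discrepancy does not control directly. To close this gap, I plan to combine (i)~the size bound $|A_1| \leq m/k + O(\sqrt{n_1})$ with the item-value normalization $u^{(i,j)}(g) \leq 1$ (which gives the crude estimate $u^{(i,j)}(A_1) \leq m/k + O(\sqrt{n_1})$), with (ii)~a dichotomy on $u^{(i,j)}(G)$: agents with $u^{(i,j)}(G)/k \leq O(\sqrt{n_1})$ are trivially PROP-satisfied, while agents with $m - u^{(i,j)}(G)$ small enough inherit the required bound from the size constraint. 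The intermediate regime---where $u^{(i,j)}(G)$ is significantly above the trivial threshold but substantially below $m$---is the most delicate case and is where most of the technical work is expected to lie, likely requiring a more refined accounting of the discrepancy's error in terms of the original PROP target rather than the recursive one.
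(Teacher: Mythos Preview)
Your recursion has a genuine gap, and the ``intermediate regime'' you flag is not merely delicate---it is where the argument breaks. In Step~1 you choose $A_1$ using only group~1's utility rows (plus an all-ones row for size). Nothing in that discrepancy instance constrains $u^{(i,j)}(A_1)$ for $i \geq 2$. Take $k = 2$, $n_1 = n_2 = 1$, and let group~2's agent value exactly $m/2$ items at $1$ and the rest at $0$. Your Step~1 is free to return precisely those $m/2$ items as $A_1$ (this satisfies both the size constraint $|A_1| \approx m/2$ and whatever group~1's agent wants). Then $u^{(2,1)}(A_2) = 0$ while $u^{(2,1)}(G)/k = m/4$, so any $B$ witnessing PROP$c$ for agent $(2,1)$ must have $|B| \geq m/4$. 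Neither your size bound on $|A_1|$ nor your dichotomy on $u^{(i,j)}(G)$ helps here: this agent has $u^{(2,1)}(G) = m/2$, squarely in the intermediate range. The point is that ``peeling off one group at a time'' cannot work unless every split already respects \emph{every} remaining agent's utility.

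The paper's argument differs in exactly this respect. It introduces an \emph{asymmetric} $k$-color discrepancy (\Cref{def:k-odisc}) and bounds it by $O(\sqrt{n_1})$ via a halving recursion: at each level the colors are split into two halves of size $\lfloor k/2 \rfloor$ and $\lceil k/2 \rceil$, but the weighted-discrepancy call at that level is applied to the \emph{concatenation of all $k$ matrices}, incurring error $O(\sqrt{n}) \leq O(\sqrt{n_1 k})$. Because this error is then divided by roughly $k/2$ before recursing, the per-level contribution is $O(\sqrt{n_1/k})$ and the levels telescope to $O(\sqrt{n_1})$. Separately, the reduction from asymmetric discrepancy to PROP$c$ (\Cref{thm:odisc-to-prop}) uses the large-items trick rather than assuming item values are at most $1$---an assumption you invoke but which is not part of the PROP$c$ setting.
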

An interesting case to highlight is when $n_1 = \cdots = n_k = n/k$. In this case, \Cref{cor:prop-ub-main} yields an upper bound of $O(\sqrt{n/k})$ which matches our lower bound from \Cref{cor:prop}.

\paragraph*{Asymmetric Discrepancy.} Similar to previous proofs, we will prove \Cref{cor:prop-ub-main} by relating it to a discrepancy-based notion. However, to achieve the improved bound, we need a new notion of discrepancy, which we define as \emph{asymmetric discrepancy} below in \Cref{def:k-odisc}. At a high level, this is similar to discrepancy except that there are $k$ different matrices that is used to measure the discrepancy of different colors. %This indeed corresponds nicely to proportionality, as we will show formally below.

\begin{definition} \label{def:k-odisc}
The \emph{asymmetric $k$-color discrepancy} of $k$ matrices $\bA^1 \in [0, 1]^{n_1 \times m}, \dots, \bA^k \in [0, 1]^{n_k \times m}$ is defined as
\begin{align*}
\odisc(\bA^1, \dots, \bA^k) := \min_{\chi: [m] \to [k]} \max_{s \in [k]} \left\|\bA^s\left(\frac{1}{k} \cdot \bone - \bone(\chi^{-1}(s))\right)\right\|_{\infty}.
\end{align*}
For all $k, n_1, \dots, n_k \in \N$, let 
$$\exodisc(n_1, \dots, n_k) := \sup_{m \in \N} \sup_{\bA^1 \in [0, 1]^{n_1 \times m}, \dots, \bA^k \in [0, 1]^{n_k \times m}} \odisc(\bA^1, \dots, \bA^k).$$
\end{definition}

Observant readers can probably see similarities between the above definition and that of proportionality already. Indeed, we can formally show that an upper bound on the former implies that of the latter, as stated below.

\begin{theorem} \label{thm:odisc-to-prop}
For all $k, n_1, \dots, n_k \in \N$, $\cprop(n_1, \dots, n_k) \leq 2 \cdot \lceil \exodisc(n_1, \dots, n_k) \rceil$.
\end{theorem}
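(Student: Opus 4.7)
The plan is to reduce the fair-division instance to an asymmetric discrepancy instance, invoke the bound on $\exodisc$ to extract an allocation, and then read off a PROP$c$-witness per agent.

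First, since multiplying an agent's utility by any positive constant leaves the PROP condition invariant, I may assume WLOG that $u^{(i,j)}(g) \in [0,1]$ for every agent $(i,j)$ and item $g$. For each group $i \in [k]$, define $\bA^i \in [0,1]^{n_i \times m}$ by $\bA^i_{j,g} := u^{(i,j)}(g)$. By the definition of $\exodisc$, there is a coloring $\chi : [m] \to [k]$ with $\odisc(\bA^1,\dots,\bA^k) \le D := \exodisc(n_1,\dots,n_k)$. Setting $A_s := \chi^{-1}(s)$, this unfolds into the per-agent bound
\[
\bigl| u^{(i,j)}(A_i) - u^{(i,j)}(G)/k \bigr| \le D \quad \text{for all } (i,j).
\]

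Next, to verify PROP$c$ for $c := 2\lceil D \rceil$, I fix an agent $(i,j)$, set $\delta := \bigl[ u^{(i,j)}(G)/k - u^{(i,j)}(A_i) \bigr]_+ \le D$, and take $B^{(i,j)} \subseteq G \setminus A_i$ to consist of the top $c$ items ranked by $u^{(i,j)}$ (or all of $G \setminus A_i$ if fewer). The PROP$c$ condition for this agent is then equivalent to $u^{(i,j)}(B^{(i,j)}) \ge \delta$. When $|G \setminus A_i| \le c$, this follows from $u^{(i,j)}(G \setminus A_i) \ge (1-1/k)\, u^{(i,j)}(G) \ge \delta$. Otherwise, a clean case is when the $c$-th largest utility $v$ in $G \setminus A_i$ is at least $1/2$: then $u^{(i,j)}(B^{(i,j)}) \ge c \cdot v \ge c/2 = \lceil D \rceil \ge \delta$.

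The main obstacle I anticipate is the case $v < 1/2$, where the top $c$ items could a priori sum to less than $\delta$. My plan to address this is to exploit the \emph{effective scale} of row $(i,j)$: once $v < 1/2$ is known, most entries of $\bA^i_j$ restricted to $G \setminus A_i$ are at most $2v$, so a rescaling-style observation applied to a row-normalized variant of the matrix should tighten the row's discrepancy contribution from $D$ down to roughly $v \cdot D$, yielding $\delta \le v \cdot c$ as needed. Should this route prove fiddly, a fallback is an augmentation/swap argument: a hypothetical failure of the top-$c$ bound would permit a local perturbation of $\chi$ (e.g., swapping a high-utility item in $G \setminus A_i$ into $A_i$ against a low-utility item in $A_i$) that reduces this row's discrepancy without worsening any other row, contradicting the choice of $\chi$ as a discrepancy minimizer.
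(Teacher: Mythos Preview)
Your reduction has a genuine gap in the case $v < 1/2$, and neither of your proposed fixes works.

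The rescaling idea is circular. The quantity $v$ (the $c$-th largest utility in $G\setminus A_i$) depends on the allocation $A_i$, which comes from the coloring $\chi$, which comes from applying the discrepancy bound to the matrix $\bA$. You cannot go back and change row $(i,j)$ of the matrix after seeing $v$. If instead you try to rescale row $(i,j)$ by some allocation-independent factor (say, the $c$-th largest utility overall), the large entries will exceed $1$ and the row will no longer lie in $[0,1]^m$, so the bound on $\exodisc$ no longer applies. The swap fallback is also unjustified: moving an item between $A_i$ and some $A_{i'}$ changes the discrepancy of \emph{every} agent in groups $i$ and $i'$, and there is no reason these other rows cannot get worse.

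The paper's proof supplies exactly the missing idea. For each agent $(i,j)$, \emph{before} building the matrix, it identifies the set $\Slarge^{(i,j)}$ of her $kH$ most valuable items (where $H=\lceil\exodisc(n_1,\dots,n_k)\rceil$), sets $p^{(i,j)}=\min_{g\in\Slarge^{(i,j)}}u^{(i,j)}(g)$, and defines the row by $\bz^{(i,j)}_g = u^{(i,j)}(g)/p^{(i,j)}$ on $g\notin\Slarge^{(i,j)}$ and $0$ on $\Slarge^{(i,j)}$. This row is in $[0,1]^m$ by construction, is allocation-independent, and the discrepancy bound now reads
\[
\frac{u^{(i,j)}(G\setminus\Slarge^{(i,j)})}{k} - u^{(i,j)}(A_i\setminus\Slarge^{(i,j)}) \le H\cdot p^{(i,j)}.
\]
Taking $B$ to be the $2H$ most valuable items not in $A_i$, each worth at least $p^{(i,j)}$, gives $u^{(i,j)}(B)\ge 2H\,p^{(i,j)}$, and a short computation (using $|\Slarge^{(i,j)}|=kH$) closes the PROP$c$ inequality with $c=2H$. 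The separation into ``large'' and ``small'' items plus the threshold normalization is the key step your proposal is missing.
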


Finally, via known techniques~\cite{DoerrSr03}, we  give the following upper bound on the asymmetric discrepancy:

\begin{theorem} \label{thm:odisc-ub}
For positive integers $k$ and $n_1 \geq \dots \geq n_k$, $\exodisc(n_1, \dots, n_k) \leq O(\sqrt{n_1})$
\end{theorem}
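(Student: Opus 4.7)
The plan is to adapt the tree-based rounding of Doerr--Srivastav to the asymmetric setting. Assume first that $k$ is a power of $2$. Build a complete binary tree of depth $d = \log_2 k$ whose leaves are labeled by $[k]$, and for each node $v$ let $L(v) \subseteq [k]$ be the set of colors at leaves descending from $v$ (so $|L(v)| = k/2^{\mathrm{depth}(v)}$). The coloring $\chi$ is built by iteratively pushing items down the tree: initialize $S(\mathrm{root}) = [m]$, and at each internal node $v$ with children $v_1, v_2$ partition $S(v)$ as $S(v_1) \sqcup S(v_2)$ via $1/2$-weighted discrepancy of a stacked matrix. Specifically, form the matrix $\bA^{L(v)}$ by vertically concatenating the matrices $\bA^s$ for $s \in L(v)$; it has at most $\sum_{s \in L(v)} n_s \leq |L(v)| \cdot n_1$ rows. \Cref{thm:wdisc-ub} with $p = 1/2$ yields $S(v_1) \subseteq S(v)$ satisfying
\[
\bigl\|\bA^s\bigl(\bone(S(v_1)) - \tfrac{1}{2}\bone(S(v))\bigr)\bigr\|_\infty \;\leq\; \gamma\sqrt{|L(v)| \cdot n_1}
\]
for every $s \in L(v)$, where $\gamma$ is the constant in \Cref{thm:wdisc-ub}. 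Finally, define $\chi(j) = s$ iff $j$ ends up at the leaf labeled $s$.

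To bound the error for a fixed color $s$, let $v_0 = \mathrm{root}, v_1, \dots, v_d$ be the nodes on the path to the leaf labeled $s$, and set $\delta_i := \bA^s\bone(S(v_{i+1})) - \tfrac{1}{2}\bA^s\bone(S(v_i))$, so $\|\delta_i\|_\infty \leq \gamma\sqrt{(k/2^i) n_1}$ by the display above. Iterating the recurrence $\bA^s\bone(S(v_{i+1})) = \tfrac{1}{2}\bA^s\bone(S(v_i)) + \delta_i$ gives
\[
\bA^s\bone(\chi^{-1}(s)) \;=\; \tfrac{1}{k}\bA^s\bone \;+\; \sum_{i=0}^{d-1} 2^{\,i+1-d}\,\delta_i,
\]
so the asymmetric discrepancy of color $s$ is at most $\gamma\sum_{i=0}^{d-1} 2^{i+1-d}\sqrt{(k/2^i)n_1}$. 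Substituting $j = d - i$ reduces this to $2\gamma\sqrt{n_1}\sum_{j=1}^{d} 2^{-j/2}$, a convergent geometric series, giving the desired $O(\sqrt{n_1})$ bound. Taking the max over $s$ and then over the input yields $\exodisc(n_1,\dots,n_k) \leq O(\sqrt{n_1})$.

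The main subtlety lies in the balance between the damping factor $2^{i+1-d}$ inherited from the telescoping and the level-$i$ discrepancy bound $\sqrt{(k/2^i) n_1}$ that grows with $|L(v_i)|$; the geometric-series computation above is what confirms these combine to a constant independent of $k$. For $k$ not a power of $2$, one uses an unbalanced binary tree with $k$ leaves: at each internal node $v$ with children $v_1, v_2$, invoke $|L(v_1)|/|L(v)|$-weighted (rather than $1/2$-weighted) discrepancy, and carry out the analogous telescoping, which again yields $O(\sqrt{n_1})$.
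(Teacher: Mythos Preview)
Your proposal is correct and follows essentially the same approach as the paper: both build a halving recursion tree on the color set, at each node stacking only the matrices $\bA^s$ with $s\in L(v)$ (so the row count is $\le |L(v)|\cdot n_1$) and invoking \Cref{thm:wdisc-ub} to split. The paper packages the analysis as an induction with hypothesis $\zeta(1-1/\sqrt{k'})\sqrt{n_1}$ and handles general $k$ via $k_1=\lfloor k/2\rfloor$, $k_2=\lceil k/2\rceil$, whereas you unroll the telescoping explicitly into the geometric series $2\gamma\sqrt{n_1}\sum_j 2^{-j/2}$; these are two presentations of the same argument.
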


Notice that our main upper bound (\Cref{cor:prop-ub-main}) is indeed an immediate corollary of the above two theorems.
The subsequent two subsections are devoted to the proofs of \Cref{thm:odisc-ub} and \Cref{thm:odisc-to-prop}, respectively.

\subsection{Upper Bound on Asymmetric Discrepancy}

To prove \Cref{thm:odisc-ub}, we employ the ``recursion'' construction similar to \cite{DoerrSr03}. Namely, we group the colors $[k]$ into two groups $\{1, \dots, k_1\}$ and $\{k_1 + 1, \dots, k\}$ where $k_1 = \lfloor k/2 \rfloor$. We then use the weighted discrepancy upper bound (\Cref{thm:wdisc-ub}) to partition the columns into two groups such that the first group is roughly $\frac{k_1}{k}$ fraction w.r.t. each row. We then recurse on the two sides. The improvement here is due to the fact that, when we recurse on the first group, we only have to consider the first $k_1$ matrices $\bA^1, \dots, \bA^{k_1}$ instead of all the $k$ matrices. This intuition is formalized below.

\begin{proof}[Proof of \Cref{thm:odisc-ub}]
Let $\zeta := 100\gamma$ where $\gamma$ is the constant from \Cref{thm:wdisc-ub}.
We will prove the following statement by induction on $k'$:
\begin{align} \label{eq:prop-induction}
\exodisc(n_1, \dots, n_{k'}) \leq \zeta \left(1 - \frac{1}{\sqrt{k'}}\right) \cdot \sqrt{n_1} & &\forall n_1 \leq \cdots \leq n_{k'}.
\end{align}
The base case $k' = 1$ is obvious. For the inductive step, consider any integer $k \geq 2$. Suppose that \eqref{eq:prop-induction} holds for any $k' < k$. We will show that it also holds for $k' = k$. For any matrices $\bA^1 \in [0, 1]^{n_1 \times m}, \dots, \bA^k \in [0, 1]^{n_k \times m}$, we create our $k$-coloring $\chi: [m] \to [k]$ as follows:
\begin{enumerate}
\item Let $k_1 = \lfloor k/2 \rfloor$ and $k_2 = k - k_1 = \lceil k/2 \rceil$.
%\item Let $\tbA^{1} \in [0, 1]^{(n_1 + \cdots + n_{k_1}) \times m}$ and $\tbA^{2} \in [0, 1]^{(n_{k_1 + 1} + \cdots + n_k) \times m}$ be defined by
%\begin{align*}
%\tbA^1 = 
%\begin{bmatrix}
%\bA^1 \\
%\vdots \\
%\bA^{k_1}
%\end{bmatrix}
%& & \tbA^2 = 
%\begin{bmatrix}
%\bA^{k_1+1} \\
%\vdots \\
%\bA^{k}
%\end{bmatrix}
%\end{align*}
\item Let $\tbA \in [0, 1]^{(n_1 + \cdots + n_k) \times m}$ be the matrix resulting from vertically concatenating $\bA^1, \dots, \bA^k$. Then, apply \Cref{thm:wdisc-ub} to $\tbA$ with $p = \frac{k_1}{k}$ to obtain $\bx \in \{0, 1\}^m$ such that
\begin{align} \label{eq:outer-coloring}
\|\tbA(p \cdot \bone - \bx)\|_{\infty} \leq \gamma \sqrt{n} \leq \gamma \sqrt{n_1 k}.
\end{align}
Let $S_1 := \{i \in [m] \mid \bx_i = 1\}$ and $S_2 := [m] \setminus S_1$. Notice that $\bx = \bone(S_1)$ and $\bone - \bx = \bone(S_2)$.
\item Apply the inductive hypothesis to $\bA^1, \dots, \bA^{k_1}$ to obtain $\tchi: S_1 \to [k_1]$ such that
\begin{align} \label{eq:first-group-inner-coloring}
\left\|\bA^s\left(\frac{1}{k_1} \cdot \bone(S_1) - \bone(\tchi^{-1}(s))\right)\right\|_{\infty} \leq \zeta\left(1 - \frac{1}{\sqrt{k_1}}\right) \cdot \sqrt{n_1}. & &\forall s \in [k_1]
\end{align}
Similarly, apply the inductive hypothesis to $\bA^{k_1 + 1}, \dots, \bA^{k}$ to obtain a coloring $\ochi: S_2 \to \{k_1+1, \dots, k\}$ such that
\begin{align} \label{eq:second-group-inner-coloring}
\left\|\bA^s\left(\frac{1}{k_2} \cdot \bone(S_2) - \bone(\ochi^{-1}(s))\right)\right\|_{\infty} \leq \zeta\left(1 - \frac{1}{\sqrt{k_2}}\right) \cdot \sqrt{n_1}. & &\forall s \in \{k_1+1,\dots,k\}
\end{align}
Set the final coloring $\chi$ to be the concatenation of the two colorings $\tchi, \ochi$.
\end{enumerate}

We next bound $\left\|\bA^s\left(\frac{1}{k} \cdot \bone - \bone(\chi^{-1}(s))\right)\right\|_{\infty}$ for each $s \in [k]$ based on whether $s \in [k_1]$.

First, consider the case where $s \in [k_1]$. In this case, we have
\begin{align*}
&\left\|\bA^s\left(\frac{1}{k} \cdot \bone - \bone(\chi^{-1}(s))\right)\right\|_{\infty} \\
&\leq \left\|\bA^s\left(\frac{1}{k_1} \cdot \bone(S_1) - \bone(\tchi^{-1}(s))\right)\right\|_{\infty} + \left\|\bA^s\left(\frac{1}{k} \cdot \bone - \frac{1}{k_1} \cdot \bone(S_1)\right)\right\|_{\infty} \\
&= \left\|\bA^s\left(\frac{1}{k_1} \cdot \bone(S_1) - \bone(\tchi^{-1}(s))\right)\right\|_{\infty} + \frac{1}{k_1} \cdot \left\|\bA^s\left(p \cdot \bone - \bx\right)\right\|_{\infty} \\
&\overset{\eqref{eq:outer-coloring}, \eqref{eq:first-group-inner-coloring}}{\leq} \zeta\left(1 - \frac{1}{\sqrt{k_1}}\right) \cdot \sqrt{n_1} + \frac{1}{k_1} \cdot \gamma \sqrt{n_1 k} \\
&= \zeta\left(1 - \frac{1}{\sqrt{k_1}} + \frac{0.01\sqrt{k}}{k_1}\right) \cdot \sqrt{n_1} \\
&\leq \zeta\left(1 - \frac{1}{\sqrt{k_1}} + \frac{0.01\sqrt{3k_1}}{k_1}\right) \cdot \sqrt{n_1} \\
&\leq \zeta\left(1 - \frac{0.9}{\sqrt{k_1}}\right) \cdot \sqrt{n_1} \\
&\leq \zeta\left(1 - \frac{1}{\sqrt{k}}\right) \cdot \sqrt{n_1}.
\end{align*}
Similarly, for the case $s \notin [k_1]$, we have
\begin{align*}
&\left\|\bA^s\left(\frac{1}{k} \cdot \bone - \bone(\chi^{-1}(s))\right)\right\|_{\infty} \\
&\leq \left\|\bA^s\left(\frac{1}{k_2} \cdot \bone(S_2) - \bone(\ochi^{-1}(s))\right)\right\|_{\infty} + \left\|\bA^s\left(\frac{1}{k} \cdot \bone - \frac{1}{k_2} \cdot \bone(S_2)\right)\right\|_{\infty} \\
&= \left\|\bA^s\left(\frac{1}{k_2} \cdot \bone(S_2) - \bone(\ochi^{-1}(s))\right)\right\|_{\infty} + \frac{1}{k_2} \cdot \left\|\bA^s\left(p \cdot \bone - \bx\right)\right\|_{\infty} \\
&\overset{\eqref{eq:outer-coloring}, \eqref{eq:second-group-inner-coloring}}{\leq} \zeta\left(1 - \frac{1}{\sqrt{k_2}}\right) \cdot \sqrt{n_1} + \frac{1}{k_2} \cdot \gamma \sqrt{n_1 k} \\
&= \zeta\left(1 - \frac{1}{\sqrt{k_2}} + \frac{0.01\sqrt{k}}{k_2}\right) \cdot \sqrt{n_1} \\
&\leq \zeta\left(1 - \frac{1}{\sqrt{k_2}} + \frac{0.01\sqrt{2k_2}}{k_2}\right) \cdot \sqrt{n_1} \\
&\leq \zeta\left(1 - \frac{0.9}{\sqrt{k_2}}\right) \cdot \sqrt{n_1} \\
&\leq \zeta\left(1 - \frac{1}{\sqrt{k}}\right) \cdot \sqrt{n_1}.
\end{align*}
Thus, in both cases, we have $\left\|\bA^s\left(\frac{1}{k} \cdot \bone - \bone(\chi^{-1}(s))\right)\right\|_{\infty} \leq \zeta\left(1 - \frac{1}{\sqrt{k}}\right) \cdot \sqrt{n_1}$, and we can conclude that \eqref{eq:prop-induction} holds for all $k' \in \N$ as desired.
\end{proof}

\subsection{From Asymmetric Discrepancy to Proportionality}

Finally, we prove \Cref{thm:odisc-to-prop} via a similar reduction from \cite{ManurangsiS22} who used it to relate discrepancy with (approximate) consensus division. The rough idea of their reduction is that, for each agent, we divide the items into ``large'' and ``small'' based on the utility values. Then, we create two discrepancy constraints: The first is to ensure that the \emph{number} of large items are balanced between each part, and the second is to ensure that the \emph{utility} of small items are balanced. The latter involves scaling the utility values with an appropriate normalization so that the values belong to $[0,1]$. We use almost the same reduction as this, except that the reduction is even simpler. Specifically, we do not need the first type of constraints at all. This is because we seek (approximate) proportionality, which is weaker than (approximate) consensus division.

\begin{proof}[Proof of \Cref{thm:odisc-to-prop}]
The case $k = 1$ is obvious; henceforth, we will focus on $k \geq 2$.
Let $H := \lceil \exodisc(n_1, \dots, n_k) \rceil$. Assume w.l.o.g. that $G = [m]$ and\footnote{Otherwise, we may add dummy goods with zero values to increase $m$.} $m \geq kH$.

For every agent $a^{(i, j)}$, let $\Slarge^{(i,j)} \subseteq G$ be the set of her $kH$ most valuable goods, and $\tSlarge^{(i,j)} \subseteq \Slarge^{(i,j)}$ be the set of her $2H$ most valuable goods. Furthermore, let $p^{(i, j)} := \min_{g \in \Slarge^{(i, j)}} u^{(i, j)}(g)$, and $\bz^{(i,j)} \in [0, 1]^m$ be defined as\footnote{We use the convention that 0/0 = 0.}
\begin{align*}
\bz^{(i, j)}_g =
\begin{cases}
u^{(i, j)}(g) / p^{(i, j)} & \text{ if } g \notin \Slarge^{(i, j)}, \\
0 & \text{ otherwise.}
\end{cases}
& &\forall g \in [m].
\end{align*}

For each group $i \in [k]$, let $\bA^i = \begin{bmatrix}
\bz^{(i, 1)}
\cdots
\bz^{(i, n_i)}
\end{bmatrix}^T$. We have $\bA^i \in [0, 1]^{n_i \times m}$. From the definition of $\exodisc$, there exists $\chi: [m] \to [k]$ such that
\begin{align} \label{eq:prop-disc-bound}
\left\|\bA^i\left(\frac{1}{k} \cdot \bone - \bone(\chi^{-1}(i))\right)\right\|_{\infty} \leq H & &\forall i \in [k].
\end{align}
Consider the allocation $A = (A_1, \dots, A_k)$ where $A_i = \chi^{-1}(i)$ for all $i \in [k]$.
We claim that this allocation is PROP$c$ for $c = 2H$. To show this, consider any agent $a^{(i, j)}$. From \eqref{eq:prop-disc-bound}, we have
\begin{align}
H &\geq \left<\bz^{(i, j)}, \frac{1}{k} \cdot \bone - \bone(A_i)\right> = \frac{1}{p^{(i, j)}} \left(\frac{u^{(i, j)}\left(G \setminus \Slarge^{(i, j)}\right)}{k} - u^{(i, j)}\left(A_i \setminus \Slarge^{(i, j)}\right)\right). \label{eq:prop-util-bound}
\end{align}
Let $B = \tSlarge^{(i, j)} \setminus A_i$. Since $|\tSlarge^{(i, j)}| = c$, we have $|B| \leq c$. Furthermore, we have
\begin{align*}
&u^{(i,j)}(A_i) + u^{(i,j)}(B) - u^{(i,j)}(G) / k \\
&\geq u^{(i, j)}\left(\tSlarge^{(i, j)}\right) + u^{(i, j)}\left(A_i \setminus \Slarge^{(i, j)}\right) - u^{(i,j)}(G) / k \\
&= \left(u^{(i, j)}\left(\tSlarge^{(i, j)}\right) - \frac{u^{(i, j)}\left(\Slarge^{(i,j)}\right)}{k}\right) - \left(\frac{u^{(i, j)}\left(G \setminus \Slarge^{(i, j)}\right)}{k} - u^{(i, j)}\left(A_i \setminus \Slarge^{(i, j)}\right)\right) \\
&\overset{\eqref{eq:prop-util-bound}}{\geq} \left(1 - \frac{1}{k} \cdot \frac{|\Slarge^{(i, j)}|}{|\tSlarge^{(i, j)}|}\right) \cdot u^{(i, j)}\left(\tSlarge^{(i, j)}\right) - H \cdot p^{(i, j)} \\
&= \frac{1}{2} \cdot u^{(i, j)}\left(\tSlarge^{(i, j)}\right) - H \cdot p^{(i, j)} \\
&\geq \frac{1}{2} \cdot |\tSlarge^{(i, j)}| \cdot p^{(i, j)} - H \cdot p^{(i, j)} \\
&= 0,
\end{align*}
where the last inequality is from our definition of $p^{(i, j)}$.
Thus, the allocation is PROP$c$.
\end{proof}

\section{Conclusion and Open Questions}

We prove an asymptotically tight lower bound of $\Omega(\sqrt{n})$ on both $\exdisc(n, k)$ and $\exwdisc_p(n)$, which in turn implies an asymptotically tight  bound for $\ccdiv(n)$. We also obtain improved lower bounds for $\cef(n_1, \dots, n_k)$ and $\cprop(n_1, \dots, n_k)$, although they are not yet tight. It remains an interesting question to close this gap. A particularly natural case is when $n_1 = \cdots = n_k = n/k$; our lower bounds for $\cef$ is $\Omega(\sqrt{n/k})$ whereas the upper bound from~\cite{ManurangsiS22} is $O(\sqrt{n})$. Intriguingly, for $n_1 = \cdots = n_k = 1$, it is known that $\cef(n_1, \dots, n_k) = 1$~\cite{LiptonMaMo04}. Thus, closing this gap seems to require some innovation on the \emph{upper bound} front.

\subsection*{Acknowledgement}

We thank the anonymous reviewers for SOSA 2026 who have provided several feedback, including suggesting a generalized notion used in Definition~\ref{def:k-odisc}.

\bibliographystyle{alpha}
\bibliography{ref}

\end{document}